\documentclass{llncs}

\usepackage[T1]{fontenc}
\usepackage[latin2]{inputenc}
\usepackage[english]{babel}

\usepackage{dsfont, complexity} 
\usepackage{amsmath}
\usepackage{amssymb}
\usepackage{lmodern}
\usepackage{float}
\usepackage{graphicx}
\usepackage{tikz}
\usepackage{xcolor}
\usepackage{paralist}
\usetikzlibrary{arrows,decorations.markings,patterns,calc, positioning}

\usepackage{algpseudocode}
\usepackage[ruled]{algorithm} 

\newcommand{\mycaption}[2]
 {\begin{center} \parbox{4in}{\caption{\small #2 \label{#1}}} \end{center}}

\begin{document}

\title{Improved Algorithmic Results for Unsplittable Stable Allocation 
Problems}
\author{\'{A}gnes Cseh\inst{1} \and Brian C. Dean\inst{2}}

\institute{Institute for Mathematics, TU Berlin
\and School of Computing, Clemson University}

\maketitle

\begin{abstract}
The stable allocation problem is a many-to-many generalization of the
well-known stable marriage problem, where we seek a bipartite
assignment between, say, jobs (of varying sizes) and machines (of
varying capacities) that is ``stable'' based on a set of underlying
preference lists submitted by the jobs and machines.  Building on the
initial work of \cite{dean_unsplit}, we study a natural
``unsplittable'' variant of this problem, where each assigned job must
be fully assigned to a single machine.  Such unsplittable bipartite
assignment problems generally tend to be NP-hard, including
previously-proposed variants of the unsplittable stable allocation
problem \cite{DBLP:journals/jco/McDermidM10}.  Our main result is to
show that under an alternative model of stability, the unsplittable
stable allocation problem becomes solvable in polynomial time;
although this model is less likely to admit feasible solutions than
the model proposed in \cite{DBLP:journals/jco/McDermidM10}, we show
that in the event there is no feasible solution, our approach computes
a solution of minimal total congestion (overfilling of all machines
collectively beyond their capacities).  We also describe a technique
for rounding the solution of a stable allocation problem to produce
``relaxed'' unsplit solutions that are only mildly infeasible, where
each machine is overcongested by at most a single job.  
\end{abstract}




\section{Introduction}
Consider a bipartite assignment problem over a graph $G = (V = J \cup
M, E)$ involving the assignment of a set of jobs $J$ to a set of
machines $M$. Each job $j \in J$ has a processing time $q(j)$, 
each machine $m \in M$ has a capacity $q(m)$, and there is a capacity
$c(jm)$ for each edge $jm \in E$ governing the maximum amount of job
$j$ that can be assigned to machine $m$.  A feasible assignment of
jobs to machines is described by a function $x : E \rightarrow
\mathbb{R}_{\geq 0}$ such that 
\begin{enumerate}
\item $0 \leq x(jm) \leq c(jm)$ for all edges $jm \in E$,
\item $x(j) := \sum_{m \in M} {x(jm)} \leq q(j)$ for all jobs $j \in J$, and
\item $x(m) := \sum_{j \in J} {x(jm)} \leq q(m)$ for all machines $m \in M$.
\end{enumerate}
If $x(jm) \in \{0, q(j)\}$ for all $jm \in E$, we say the assignment
is \emph{unsplit}, since each assigned job is assigned in its entirety
to a single machine. We often forgo the use of edge capacities $c(jm)$
when discussing unsplit assignments, since an edge $jm$ can simply be
deleted if $c(jm) < q(j)$.

Problems of the form above have been extensively studied in the
algorithmic literature, where typical objectives are to find a
feasible assignment or one of maximum weight (maximizing a linear
objective function $\sum_{jm \in E} w(jm) x(jm)$, where $w(jm)$ is the
weight of edge $jm \in E$).  While the fractional (splittable)
variants of these problems are easy to solve in polynomial time via
network flow techniques, the unsplittable variants are somewhat more
interesting.  In linear time, one can greedily assign jobs arbitrarily
to machines until no further assignments are possible, thereby
producing a {\em maximal} assignment.  However, if we care about
finding an unsplit assignment of measurably good quality, the problem
becomes substantially harder.  It is NP-hard to find an unsplit
assignment of either maximum total size $|x| = \sum_{jm \in E} x(jm)$
or of maximum weight; the former is a variant of the multiple subset
sum problem \cite{Caprara2000111}, and the latter is known as the
multiple knapsack problem \cite{DBLP:journals/siamcomp/ChekuriK05}.

In contrast to problems with explicit edge costs, the \emph{stable
  allocation problem} is an ``ordinal'' problem variant where the
quality of an assignment is expressed in a more game theoretic setting
via ranked preference lists submitted by the jobs and machines, with
respect to which we seek an assignment that is \emph{stable} (defined
shortly).  In this paper, we study the stable allocation problem in
the unsplittable setting, which was shown to be NP-hard in
\cite{DBLP:journals/jco/McDermidM10} using one natural definition for
stability.  We show here that by contrast, a different and more strict
notion of stability, proposed initially in \cite{dean_unsplit}, leads
to an $O(|E|)$ algorithm for the unsplit problem.  The tradeoff is
that under this different notion of stability, it is unlikely that
feasible solutions will exist.  However, we show that by relaxing the
problem to allow mildly infeasible solutions, our algorithm 
computes a ``relaxed'' unsplit stable solution (in~which each machine
is filled beyond its capacity by at most the allocation of a single
job) in which the total amount of overcongestion across all machines,
$\sum_{m \in M} { \max\left(0, x(m)-q(m)\right)}$, is minimized (so
in particular, if there is a feasible solution with no congestion, we
will find it).

Through the work of several former authors
\cite{DinitzGG99,Skutella00,ShmoysT93}, the ``relaxed'' model has
become relatively popular in the context of unsplittable bipartite
assignment and unsplittable flow problems.  The standard approximation
algorithm framework (finding an approximately-optimal, feasible
solution) typically does not fit these problems, since finding any
feasible solution is typically NP-hard.  Instead, authors tend to
focus on pseudo-approximation results with minimal congestion per
machine or per edge.  Analogous results were previously developed for
unsplit stable allocation problems in \cite{dean_unsplit}, where an
unsplit stable allocation can be found in linear time in which each
machine is overcongested by at most a single job.  The model of
stability proposed in \cite{dean_unsplit} is the one we further
develop in this paper, and among all of these prior approaches
(including those for standard unsplittable bipartite assignment and
flows), it seems to be the only unsplit model studied to date in which
minimization of {\em total} congestion is possible in polynomial time.
Hence, there is a substantial algorithmic incentive to consider this
model, even though its notion of stability is less natural than in
\cite{DBLP:journals/jco/McDermidM10}.

In our ``relaxed'' unsplit model, we develop new structural and
algorithmic results by showing how to compute in $O(|E|)$ time a
``job-optimal'' assignment that maximizes the total size $|x|$ of all
assigned jobs, and a ``machine-optimal'' assignment that minimizes
$|x|$.  It is this machine-optimal solution that we show also
minimizes total congestion.  In order to produce potentially other
solutions (e.g., that might be more fair to both sides), we show also
a technique for ``rounding'' a solution of the fractional stable
allocation problem to obtain a relaxed unsplit solution.  Finally, we
comment on several mathematical properties of the set of all relaxed
unsplit solutions, showing that while they unfortunately seem to lack
the nice distributed lattice structure satisfied by solutions of the
stable matching and allocation problems, they do at least adhere to a
weakened form of the so-called ``rural hospital'' theorem, defined
shortly.

\section{Background and Preliminaries}

\subsection{Stable Matching and Allocation Problems}

{\bf Stable Marriage.} The stable marriage (or stable matching)
problem takes place on a bipartite graph with men on one side and
women on the other, where each individual submits a strictly-ordered,
but possibly incomplete preference list of the members of the opposite
sex.  The goal is to find a matching that is \emph{stable}, containing
no \emph{blocking pair} -- an unmatched (man, woman) pair $(m,w)$
where $m$ is either unmatched or prefers $w$ to his current partner,
and likewise for $w$.  

In their seminal paper~\cite{GS:1962}, Gale and Shapley describe a
simple $O(|E|)$ algorithm to find a stable matching for any instance.
The most typical incarnation of their algorithm generates a solution
that is ``man-optimal'' and ``woman-pessimal'', where each man is
matched with the best possible partner he could receive in any stable
matching, and each woman is matched with the worst possible partner
she could receive in any stable matching.  By reversing the roles of
the men and women, the algorithm can also generate a solution that is
simultaneously woman-optimal and man-pessimal.

\noindent {\bf Stable Allocation.} The stable allocation problem was
introduced by Ba\"iou and Balinski \cite{DBLP:journals/mor/BaiouB02}
as a high-multiplicity variant of the stable matching problem, where
we match non-unit elements with non-unit elements -- here, we speak of
matching jobs of varying size with machines of varying capacity.  Just
as before, jobs and machines submit strict preferences over their
outgoing edges in the bipartite assignment graph.  If job $j \in J$
prefers machine $m_1 \in M$ to machine $m_2 \in M$, we write
$\text{rank}_j(jm_1) > \text{rank}_j(jm_2)$.  A stable allocation in
this setting is a feasible allocation (as defined in the introduction)
where for every edge $jm \in E$ with $x(jm) < c(jm)$, either $j$ is
fully assigned to machines at least as good as $m$, or $m$ is fully
assigned to jobs at least as good as~$j$.  That is, there can be no
blocking edge $jm$ where $x(jm) < c(jm)$ and both $j$ and $m$ would
prefer use more of this edge.  For sake of simplicity, we say that
edges with positive $x$ value are in~$x$. Machines with $x(m) = q(m)$
are \emph{saturated}. Later, when $x(m) > q(m)$ occurs in the relaxed
version of the problem, we talk about \emph{over-capacitated}
machines. If any job prefers machine $m$ to any of its allocated
machines, then $m$ is called \emph{popular}, otherwise $m$ is
\emph{unpopular}. Note that all popular machines must be saturated in
any stable allocation.

The stable allocation problem can be solved in $O(|E| \log |V|)$ time
\cite{DBLP:journals/algorithmica/DeanM10}.  There can be many
different solutions for the same instance, but they all have the same
total allocation $|x|$, and even stronger, the values of $x(j)$ and
$x(m)$ for each job and machine remain unchanged across all stable
allocations.  This holds for both stable marriage and stable
allocation, moreover, even for stable roommate, the non-bipartite version of the problem, and is known as the {\em rural hospital theorem}.  A
common application of stable matching in practice is the National
Resident Matching Program (NRMP), where medical school graduates in
the USA are matched with residency positions at hospitals via a
centralized stable matching procedure.  A consequence of the rural
hospital theorem is that if a less-preferred (typically rural)
hospital cannot fill its quota in some stable assignment, then there
is no stable assignment in which its quota will be filled.

Like the stable marriage problem, one can always find job-optimal,
machine-pessimal and job-pessimal, machine-optimal solutions.  To
define these notions for the stable allocation problem, Ba\"iou and
Balinski \cite{DBLP:journals/mor/BaiouB02} define an order on stable
solutions based on a {\em min-min criterion}, where a job $j$ prefers
allocation $x_1$ to allocation $x_2$ if $x_1(jm) < x_2(jm)$ implies
$x_1(jm') = 0$ for every $jm'$ worse than $jm$ for $j$.  A similar
relation can be defined for machines as well.  Stable
matchings~\cite{Knuth:1976:MSR} and stable
allocations~\cite{DBLP:journals/mor/BaiouB02} both form distributive
lattices with an ordering relation based on the min-min criterion.

\subsection{Unsplittable Stable Allocation Problems}
An unsplit allocation $x$ satisfies $x(jm) \in \{0, q(j)\}$ for all
$jm \in E$, so each job is assigned in its entirety to one machine.
For simplicity, we introduce a ``dummy'' machine $m_d$ with high
capacity, which acts as the last choice for every job.  This lets us
assume without loss of generality that an unsplittable assignment
always exists in which every job is assigned.  In this context,
we define the size $|x|$ of an assignment so that jobs assigned
to $m_d$ do not count, since they are in reality unassigned.
In addition to the application of scheduling jobs in a non-preemptive
fashion, a motivating application for the unsplittable stable
allocation problem is in assigning personnel with ``two-body''
constraints.  For example, in the NRMP, a married pair of medical
school graduates might act as an unsplittable entity of size 2 (this
particular application has been studied in substantial detail in the
literature
\cite{Biro_emp,couples_survey,journals/jal/Ronn90,Roth84theevolution}).

From an algorithmic standpoint, one of the main results of this paper
is that how we define stability in the unsplit case seems quite
important.  In \cite{DBLP:journals/jco/McDermidM10}, the following
natural definition was proposed: an edge $jm$ is blocking if $j$
prefers $m$ to its current partner, and if $m$ prefers $j$ over $q(j)$
units of its current allocation.  Unfortunately, it was shown in
\cite{DBLP:journals/jco/McDermidM10} that this definition makes the
computation of an unsplit stable assignment NP-hard.  We therefore
consider an alternate, stricter notion of stability where edge $jm$ is
blocking if $j$ prefers $m$ to its current partner, and if $m$ prefers
$j$ over {\em any amount} of its current allocation.  That is, if $j$
would prefer to be assigned to $m$ over its current partner, than $m$
must be saturated with jobs $m$ prefers to $j$.  As in the splittable
case, popular machines must therefore be saturated. Practice shows~\cite{Roth96} that if a hospital is willing to hire one person in a couple, but it has no free job opening for the partner, it is most likely amenable to make room for both applicants. Therefore, our definition of a blocking pair serves practical purposes.

The existence of an unsplit stable allocation cannot be guaranteed. A
simple instance where the unique stable allocation is fractional is
shown in Figure \ref{label5}.  The quota of
each job and machine is displayed next to the vertex, while the preference lists are displayed on the edges.

\begin{figure}[t]
\begin{center}
\begin{tikzpicture}[scale=0.75, transform shape]
\tikzstyle{vertex} = [circle, draw=black]
\tikzstyle{edgelabel} = [rectangle, fill=white]

\node[vertex, label={[label distance=0.1cm]180:{1}}] (j_1) at (0, 0) {$j_1$};
\node[vertex, label={[label distance=0.1cm]0:{1}}] (m_1) at (4, 0) {$m_1$};
\node[vertex, label={[label distance=0.1cm]180:{2}}] (j_2) at (0, 2) {$j_2$};
\node[vertex, label={[label distance=0.1cm]0:{2}}] (m_2) at (4, 2) {$m_2$};

\draw [] (j_2) -- node[edgelabel, near start] {2} node[edgelabel, near end] {1} (m_2);
\draw [] (j_2) -- node[edgelabel, near start] {1} node[edgelabel, near end] {1} (m_1);
\draw [] (j_1) -- node[edgelabel, near start] {1} node[edgelabel, near end] {2} (m_1);

\end{tikzpicture}
\mycaption{label5}{This instance admits an unsplit assignment, but 
the unique stable allocation is fractional.}
\end{center}
\end{figure}

\noindent {\bf Relaxed Unsplit Assignments.}  The downside of our
alternate definition of stability is that it is unlikely to allow
feasible unsplit stable allocations to exist in most large instances.
Therefore, we consider allowing mildly-infeasible solutions where each
machine can be over-capacitated by a single job -- a model popularized
by previous results in the approximation algorithm literature for
standard unsplittable assignment problems
\cite{DinitzGG99,Skutella00,ShmoysT93}, and introduced in the context
of unsplittable stable allocation by Dean et al.~\cite{dean_unsplit}.
Specifically, we say $x$ is a \emph{relaxed unsplit assignment} if
$x(jm) \in \{0,q(j)\}$ for every edge $jm \in E$, and if for each
machine $m$, removal of the least-preferred job assigned to $m$ would
cause $x(m) < q(m)$\footnote{The model introduced in
  \cite{dean_unsplit} allows $x(m) \leq q(m)$, but we believe strict
  inequality is actually a better choice -- for mathematical reasons
  as well as from a modeling standpoint.  For example, the old
  definition applied to a hospital-resident matching scenario with
  married couples might cause a hospital to accept two more residents
  than its quota, while the new definition would only require
  accepting one more resident.  All of the
  results in \cite{dean_unsplit} hold with either definition.}.  Our
definition of stability extends easily naturally to the relaxed
setting: we say a relaxed unsplit assignment $x$ is stable if for
every edge $jm$ with $x(jm)=0$, either $j$ is assigned to a machine
$j$ prefers to $m$, or $m$'s quota is filled or exceeded with jobs $m$
prefers to $j$.  Otherwise, if edge $jm$ with $x(jm) = 0$ is preferred
by $j$ to its allocated machine and $m$'s quota is not filled up with
better edges than~$jm$, then $jm$ \emph{blocks}~$x$.

Note that the relaxed unsplit model differs from the non-relaxed
unsplit model with capacities inflated by $\max q(j)$, since stability
is still defined with respect to the original capacities.  It may be
best to regard ``capacities'' in this setting as constraints governing
start time, rather than completion time of jobs, since a machine below
its capacity is always willing to launch a new job, irrespective of
job size.  Similarly, a machine $m$ views an edge $jm$ as blocking if
the machine is not fully saturated with jobs $m$ prefers to $jm$, as
in this case $m$ is willing to accept $jm$.  The ``capacity'' of a
machine therefore reflects the cutoff at which it feels content to
receive additional assignment versus when it can no longer accept
additional load.

\section{Machine-Optimal Relaxed Unsplit Assignments}
\label{sec:jmopt}

In \cite{dean_unsplit}, a version of the Gale-Shapley algorithm is
described to find the job-optimal relaxed unsplit stable
assignment~$x_{\text{jopt}}$.  In this context, job-optimal means that
there is no relaxed unsplit stable assignment $x'$ such that any job
is assigned to a better machine in $x'$ than in~$x_{\text{jopt}}$.
The implementation described in \cite{dean_unsplit} runs in $O(|E|
\log |V|)$ time, but $O(|E|)$ is also easy to achieve.  In this
section, we show how to define and compute a \emph{machine-optimal}
relaxed unsplit stable allocation $x_{\text{mopt}}$ also in $O(|E|)$
time, and we prove the following:

\begin{theorem}
\label{thm:card}
Among all relaxed unsplit stable allocations $x$, $|x|$ is maximized
at $x = x_{\text{jopt}}$ and minimized at $x = x_{\text{mopt}}$.
\end{theorem}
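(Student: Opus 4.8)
The plan is to derive both halves of the statement from one combinatorial observation. Because the instance contains the dummy machine $m_d$ as the last choice of every job, every relaxed unsplit stable allocation $x$ assigns each job to exactly one machine: a job left unassigned would make $jm_d$ a blocking edge, since $m_d$ is never saturated and the job is assigned to nothing it prefers to $m_d$. Consequently the size of $x$ depends only on the set $A(x) := \{\, j \in J : x(jm) = q(j) \text{ for some } m \neq m_d \,\}$ of genuinely assigned jobs, namely $|x| = \sum_{j \in A(x)} q(j)$, which is monotone in $A(x)$ under inclusion. Hence it suffices to prove that $A(x) \subseteq A(x_{\text{jopt}})$ and $A(x) \supseteq A(x_{\text{mopt}})$ for every relaxed unsplit stable allocation $x$, where $x_{\text{jopt}}$ and $x_{\text{mopt}}$ are themselves relaxed unsplit stable allocations.

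The first inclusion is immediate from the job-optimality of $x_{\text{jopt}}$ proved in \cite{dean_unsplit}. If $j \in A(x)$ then the machine $j$ receives in $x$ is a real machine, which $j$ strictly prefers to $m_d$; since $x_{\text{jopt}}$ gives $j$ a machine at least as good, that machine is also real, so $j \in A(x_{\text{jopt}})$, and thus $|x| \le |x_{\text{jopt}}|$.

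For the second inclusion I would first supply $x_{\text{mopt}}$ as the output of the machine-oriented analogue of the Gale--Shapley procedure of \cite{dean_unsplit}: machines propose to jobs in preference order, each job provisionally holds the best machine it has seen, and a rejected machine proceeds down its list, each machine $m$ maintaining the relaxed invariant that the jobs committed to it overfill $q(m)$ yet deleting the least-preferred of them drops $m$ below $q(m)$ (so that $m$ may have to shed its worst held jobs when a new acceptance overshoots). A standard argument shows this halts in $O(|E|)$ time (each edge is proposed along at most once) and returns a relaxed unsplit allocation with no blocking edge. The key claim is that $x_{\text{mopt}}$ is \emph{job-pessimal}: every job $j$ is assigned in $x_{\text{mopt}}$ to a machine no better than the one it receives in any relaxed unsplit stable allocation. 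I would prove this, as in the classical setting, by showing the procedure makes no \emph{invalid rejection} --- arguing by induction over the sequence of proposals, whenever a job $j$ drops a machine $m$, either because $j$ holds a better machine or because $m$ sheds $j$ while restoring its relaxed invariant, there is no relaxed unsplit stable allocation in which $m$ does at least as well while still using $j$. Given this, the machine each job ends up holding is the worst with which it can appear in any relaxed unsplit stable allocation, which is job-pessimality; then, exactly as in the $x_{\text{jopt}}$ case, $j \in A(x_{\text{mopt}})$ forces $j \in A(x)$ for every stable $x$, so $A(x) \supseteq A(x_{\text{mopt}})$ and $|x| \ge |x_{\text{mopt}}|$.

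The main obstacle is this no-invalid-rejection lemma, because the relaxed unsplit model strips away the features that make its classical counterpart routine: jobs have different sizes, so a machine may rationally shed a large job to retain a smaller but more-preferred one; a machine's willingness to reject depends on the whole multiset it currently holds together with the one-job slack in the relaxed capacity constraint; and, as noted in the introduction, relaxed unsplit stable allocations need not form a lattice, so no lattice-theoretic shortcut is available and the invariant must be tracked explicitly through the induction. Should full job-pessimality prove more delicate than strictly necessary, the weaker assertion that every job genuinely assigned in $x_{\text{mopt}}$ is genuinely assigned in every relaxed unsplit stable allocation is already the inclusion $A(x) \supseteq A(x_{\text{mopt}})$ that we need, and it follows from the same induction restricted to tracking only whether a job's held machine is real or the dummy.
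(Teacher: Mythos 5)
Your proposal is correct and takes essentially the same route as the paper: the paper also obtains this theorem as an immediate corollary of the fact that $x_{\text{jopt}}$ is job-optimal and $x_{\text{mopt}}$ is job-pessimal (Theorem~\ref{th:opt_pess}), the latter proved by running the reversed, machine-proposing Gale--Shapley algorithm and deriving a contradiction from the first proposal made along an edge that some job prefers to its partner in a hypothetical better-for-jobs stable allocation --- precisely the ``no invalid rejection'' induction you outline, combined with your observation that $|x|$ is determined by which jobs land on real machines rather than on $m_d$. One simplification you are entitled to: in the paper's reversed algorithm a machine never sheds jobs (it simply stops proposing once $x(m) \geq q(m)$, so the relaxed invariant holds automatically because the last accepted proposal was issued while the machine was strictly under quota), which dissolves the complication you flag about machines having to rebalance their held multiset.
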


\begin{figure}[t]
\begin{center}
\begin{minipage}{.3\textwidth}
\begin{tikzpicture}[scale=0.85, transform shape]
	\tikzstyle{vertex} = [circle, draw=black]
	\tikzstyle{edgelabel} = [rectangle, fill=white]

	\node[vertex, label={[label distance=0.1cm]180: {2}}] (j_1) at (0, 0) {$j_1$};
	\node[vertex, label={[label distance=0.1cm]180: {1}}] (j_2) at (0, 2) {$j_2$};
	\node[vertex, label={[label distance=0.1cm]180: {2}}] (j_3) at (0, 4) {$j_3$};
	\node[vertex, label={[label distance=0.1cm]0: {2}}] (m_1) at (3, 1) {$m_1$};
	\node[vertex, label={[label distance=0.1cm]0: {1}}] (m_2) at (3, 3) {$m_2$};

	\draw [dashed] (j_3) -- node[edgelabel, near start] {2} node[edgelabel, near end] {1} (m_1);
	\draw [dashed] (j_2) -- node[edgelabel, near start] {2} node[edgelabel, near end] {1} (m_2);
	\draw [] (j_3) --node[edgelabel, near start] {1} node[edgelabel, near end] {2} (m_2);
	\draw [] (j_2) --  node[edgelabel, near start] {1} node[edgelabel, near end] {2}(m_1);
	\draw [] (j_1) -- node[edgelabel, near start] {1} node[edgelabel, near end] {3} (m_1);
	
\end{tikzpicture}

\vspace{0.8cm}

\begin{tikzpicture}[scale=0.85, transform shape]
	\tikzstyle{vertex} = [circle, draw=black]
	\tikzstyle{edgelabel} = [rectangle, fill=white]

	\node[vertex, label={[label distance=0.1cm]180: {2}}] (j_1) at (0, 0) {$j_1$};
	\node[vertex, label={[label distance=0.1cm]180: {3}}] (j_2) at (0, 2) {$j_2$};
	\node[vertex, label={[label distance=0.1cm]180: {1}}] (j_3) at (0, 4) {$j_3$};
	
	\node[vertex, label={[label distance=0.1cm]0: {1}}] (m_1) at (3, 0) {$m_1$};
	\node[vertex, label={[label distance=0.1cm]0: {3}}] (m_2) at (3, 2) {$m_2$};
	\node[vertex, label={[label distance=0.1cm]0: {1}}] (m_3) at (3, 4) {$m_3$};

	\draw [dashed] (j_3) -- node[edgelabel, near start] {1} node[edgelabel, near end] {2} (m_2);
	\draw [dashed] (j_2) -- node[edgelabel, near start] {1} node[edgelabel, near end] {2} (m_3);
	\draw [] (j_3) -- node[edgelabel, near start] {2} node[edgelabel, near end] {1} (m_3);
	\draw [] (j_2) -- node[edgelabel, near start] {2} node[edgelabel, near end] {1} (m_2);
	\draw [dashed] (j_1) -- node[edgelabel, near start] {1} node[edgelabel, near end] {3} (m_2);
	\draw [] (j_1) -- node[edgelabel, near start] {2} node[edgelabel, near end] {1} (m_1);
\end{tikzpicture}

\end{minipage}\hspace{3mm}\begin{minipage}{.3\textwidth}

\begin{tikzpicture}[scale=0.85, transform shape]
	\tikzstyle{vertex} = [circle, draw=black]
	\tikzstyle{edgelabel} = [rectangle, fill=white]
	\pgfmathsetmacro{\d}{1.8}

	\node[vertex, label={[label distance=0.1cm]180: {1}}] (j_1) at (0, 0) {$j_1$};
	\node[vertex, label={[label distance=0.1cm]180: {2}}] (j_2) at (0, \d) {$j_2$};
	\node[vertex, label={[label distance=0.1cm]180: {2}}] (j_3) at (0, \d*2) {$j_3$};
	\node[vertex, label={[label distance=0.1cm]180: {1}}] (j_4) at (0, \d*3) {$j_4$};
	\node[vertex, label={[label distance=0.1cm]180: {1}}] (j_5) at (0, \d*4) {$j_5$};
	\node[vertex, label={[label distance=0.1cm]180: {2}}] (j_6) at (0, \d*5) {$j_6$};
	\node[vertex, label={[label distance=0.1cm]180: {1}}] (j_7) at (0, \d*6) {$j_7$};
	
	\node[vertex, label={[label distance=0.1cm]0: {1}}] (m_1) at (3, \d*1) {$m_1$};
	\node[vertex, label={[label distance=0.1cm]0: {2}}] (m_2) at (3, \d*2) {$m_2$};
	\node[vertex, label={[label distance=0.1cm]0: {2}}] (m_3) at (3, \d*3) {$m_3$};
	\node[vertex, label={[label distance=0.1cm]0: {2}}] (m_4) at (3, \d*4) {$m_4$};
	\node[vertex, label={[label distance=0.1cm]0: {1}}] (m_5) at (3, \d*5) {$m_5$};
	
	\draw [dashed] (j_1) -- node[edgelabel, near start] {2} node[edgelabel, near end] {1} (m_1);
	\draw [dashed] (j_2) -- node[edgelabel, near start] {2} node[edgelabel, near end] {1} (m_2);
	\draw [dashed] (j_3) -- node[edgelabel, near start] {1} node[edgelabel, near end] {3} (m_3);
	\draw [dashed] (j_4) -- node[edgelabel, near start] {2} node[edgelabel, near end] {2} (m_3);
	\draw [dashed] (j_5) -- node[edgelabel, near start] {1} node[edgelabel, near end] {3} (m_4);
	\draw [dashed] (j_6) -- node[edgelabel, near start] {1} node[edgelabel, near end] {2} (m_5);
	\draw [dashed] (j_7) -- node[edgelabel, near start] {1} node[edgelabel, near end] {2} (m_4);
	
	\draw [] (j_1) -- node[edgelabel, near start] {1} node[edgelabel, near end] {2} (m_2);
	\draw [] (j_2) -- node[edgelabel, near start] {1} node[edgelabel, near end] {2} (m_1);
	\draw [] (j_3) to[out=0,in=-110] node[edgelabel, near start] {1} node[edgelabel, near end] {3}  (m_3);
	\draw [] (j_4) -- node[edgelabel, near start] {1} node[edgelabel, near end] {3} (m_2);
	\draw [] (j_5) -- node[edgelabel, near start] {2} node[edgelabel, near end] {1} (m_3);
	\draw [] (j_6) -- node[edgelabel, near start] {2} node[edgelabel, near end] {1} (m_4);
	\draw [] (j_7) -- node[edgelabel, near start] {2} node[edgelabel, near end] {1} (m_5);
\end{tikzpicture}
\end{minipage}\hspace{3mm}\begin{minipage}{.35\textwidth}
\begin{tikzpicture}[scale=0.85, transform shape]
	\tikzstyle{vertex} = [circle, draw=black]
	\tikzstyle{edgelabel} = [rectangle, fill=white]
	\pgfmathsetmacro{\d}{1.8}
	\pgfmathsetmacro{\b}{3}

	\node[vertex, label={[label distance=0.1cm]180: {1}}] (j_1) at (0, 0) {$j_1$};
	\node[vertex, label={[label distance=0.1cm]180: {2}}] (j_2) at (0, \d) {$j_2$};
	\node[vertex, label={[label distance=0.1cm]180: {1}}] (j_3) at (0, \d*2) {$j_3$};
	\node[vertex, label={[label distance=0.1cm]180: {2}}] (j_4) at (0, \d*3) {$j_4$};
	\node[vertex, label={[label distance=0.1cm]180: {3}}] (j_5) at (0, \d*4) {$j_5$};
	\node[vertex, label={[label distance=0.1cm]180: {1}}] (j_6) at (0, \d*5) {$j_6$};
	\node[vertex, label={[label distance=0.1cm]180: {2}}] (j_7) at (0, \d*6) {$j_7$};
	
	\node[vertex, label={[label distance=0.1cm]0: {1}}] (m_1) at (\b, \d*0.5) {$m_1$};
	\node[vertex, label={[label distance=0.1cm]0: {2}}] (m_2) at (\b, \d*1.5) {$m_2$};
	\node[vertex, label={[label distance=0.1cm]0: {2}}] (m_3) at (\b, \d*2.5) {$m_3$};
	\node[vertex, label={[label distance=0.1cm]0: {3}}] (m_4) at (\b, \d*3.5) {$m_4$};
	\node[vertex, label={[label distance=0.1cm]0: {1}}] (m_5) at (\b, \d*4.5) {$m_5$};
	\node[vertex, label={[label distance=0.1cm]0: {1}}] (m_6) at (\b, \d*5.5) {$m_6$};
	
	\draw [dashed] (j_1) -- node[edgelabel, near start] {2} node[edgelabel, near end] {1} (m_1);
	\draw [dashed] (j_2) -- node[edgelabel, near start] {2} node[edgelabel, near end] {1} (m_2);
	\draw [dashed] (j_3) -- node[edgelabel, near start] {2} node[edgelabel, near end] {2} (m_3);
	\draw [dashed] (j_4) -- node[edgelabel, near start] {1} node[edgelabel, near end] {3} (m_4);
	\draw [dashed] (j_5) -- node[edgelabel, near start] {1} node[edgelabel, near end] {2} (m_5);
	\draw [dashed] (j_6) -- node[edgelabel, near start] {1} node[edgelabel, near end] {2} (m_4);
	\draw [dashed] (j_7) -- node[edgelabel, near start] {2} node[edgelabel, near end] {1} (m_6);
	\draw [dotted] (j_7) -- node[edgelabel, very near start] {1} node[edgelabel, very near end] {3} (m_3);
	
	\draw [] (j_1) -- node[edgelabel, near start] {1} node[edgelabel, near end] {2} (m_2);
	\draw [] (j_2) -- node[edgelabel, near start] {1} node[edgelabel, near end] {2} (m_1);
	\draw [] (j_3) -- node[edgelabel, near start] {1} node[edgelabel, near end] {3} (m_2);
	\draw [] (j_4) -- node[edgelabel, near start] {2} node[edgelabel, near end] {1} (m_3);
	\draw [] (j_5) -- node[edgelabel, near start] {2} node[edgelabel, near end] {1} (m_4);
	\draw [] (j_6) -- node[edgelabel, near start] {2} node[edgelabel, near end] {1} (m_5);
	\draw [] (j_7) to[in=130, out=30] node[edgelabel, near start] {2} node[edgelabel, near end] {1}  (m_6);
	
\end{tikzpicture}
\end{minipage}
\mycaption{label6}{The upper-left instance admits two relaxed unsplit solutions differing in
cardinality.  The lower-left example is evidence against a rural hospital theorem.  The graph in the middle shows two incomparable relaxed unsplit solutions. The last instance is a counterexample showing the difficulty of formulating join and meet operations. The first and third graphs illustrate instances of NRMP.}
\end{center}
\end{figure}

One of the main challenges with computing a machine-optimal assignment
is defining machine-optimality.  In the stable allocation problem,
existence of a machine-optimal solution follows from the fact that all
stable solutions form a distributive lattice under the standard
min-min ordering relationship introduced in
\cite{DBLP:journals/mor/BaiouB02}.  However, this ordering seems to
depend crucially on the existence of a rural hospital theorem, which
no longer holds in the relaxed unsplit case, since relaxed unsplit
stable assignments may differ in cardinality, as shown in the
upper-left example in Figure \ref{label6}.  The dashed edges form a
stable solution of size~3, while the remaining edges build another
stable solution of size~6. Even an appropriately relaxed version of
the rural hospital theorem seems difficult to formulate over 
relaxed instances: machines can be saturated or even over-capacitated
in one relaxed unsplit stable solution, while being empty in another
one. The lower-left example in the figure shows such an instance: the
two stable assignments are denoted with the same line types, and $m_1$
is the machine that has different positions in them.  Nonetheless,
we can still prove a result in the spirit of the rural hospital theorem,
which we discuss further in Section \ref{rural_hospital}.

Without an ``exact'' rural hospital theorem, comparing two allocations
using the original min-min ordering seems problematic, and indeed one
can construct instances where two relaxed unsplit stable solutions are
incomparable according to this criterion.  For example, the instance
on the right in Figure \ref{label6} shows two relaxed unsplit
solutions (indicated with dotted and solid edges) that are
incomparable for machine~$m_3$. We therefore adopt a different but
nonetheless natural ordering relation: \emph{lexicographical
  order}. We say that machine $m$ prefers unsplit allocation $x_1$ to
allocation $x_2$ if the best edge in $x_1 \triangle x_2$ belongs
to~$x_1$, where $\triangle$ denotes the symmetric difference
operation.  The opposite ordering relation is based on the position of
jobs, and since jobs are always assigned to machines in an unsplit
fashion, the lexicographic and min-min relations are actually the same
from the job's perspectives; hence, ``job optimal'' means the same
thing under both.  The lexicographical position of the same agent in
different allocations can always be compared, and we say a relaxed
stable solution $x$ is \emph{machine-optimal} if it is at least as
good for all machines as any other relaxed stable assignment (although
we still need to show that such a solution always exists).

\subsection{The Reversed Gale-Shapley Algorithm}

For the classical stable marriage problem, the Gale-Shapley algorithm
can be reversed easily, with women proposing instead of men, to obtain
a woman-optimal solution.  We show that this idea can be generalized
(carefully accounting for multiple assignment and congestion among
machines) to compute a machine-optimal relaxed unsplit stable
assignment.  Pseudocode for the algorithm appears in Figure \ref{rev_gs}.

\begin{figure}[t]
\begin{center}
\begin{algorithmic}[1]
	\State $x(jm_d) := q(j)$ for all $j \in J$, $x(jm) := 0$ for every other $jm \in E$ 
	\While{$\exists m: x(m) < q(m)$ with a non-empty preference list}
		\State $m$ proposes to its best job $j$ with $q(j)$
		\If{$j$ prefers $m$ to its current partner}
			\State $x(jm) := q(j)$
			\State $x(jm') := 0$ for $\forall m' \neq m$
		\EndIf
			\State delete $j$ from $m$'s preference list
	\EndWhile
\end{algorithmic}
\vspace*{-0.1in}
\end{center}
\mycaption{rev_gs}{Reversed relaxed unsplit Gale-Shapley algorithm.}
\end{figure}

\begin{claim}
The algorithm terminates in $O(|E|)$ time.
\end{claim}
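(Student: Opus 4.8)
The plan is to split the argument into a combinatorial bound on the number of while-loop iterations (which already yields termination) and a data-structural argument that each iteration costs $O(1)$ amortized time. For the iteration count I would use the monovariant $\Phi$ equal to the total number of (job, machine) pairs currently present across all machines' preference lists; initially $\Phi = |E|$, since each edge appears on exactly one machine's list. Line~8 is executed exactly once per iteration and deletes precisely one such pair, and no line ever inserts a pair, so $\Phi$ strictly decreases each iteration and stays nonnegative. Hence the loop runs at most $|E|$ times. Equivalently, each machine proposes to any given job at most once, because line~8 deletes that job from its list unconditionally. This subsumes the dummy machine $m_d$ as well: all of its proposals are rejected in line~4 (it is everyone's last choice), and its list simply drains one job per selection.

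For the per-iteration cost I would fix a representation. Each machine's preference list is a linked list in decreasing preference order with a head pointer, so the best remaining job in line~3 is the head element and the deletion in line~8 is a pop-front, both $O(1)$ — the job $m$ proposes to is always the current head. Each job $j$ carries a single field recording its current partner and the (precomputed) rank values needed for line~4, making the comparison in line~4 an $O(1)$ test. The partner field is well defined because the partial assignment stays unsplit throughout: $j$ starts on the single edge $jm_d$, and a reassignment in lines~5--6 first clears $j$'s unique positive edge. The one point that must be spelled out here is that line~6, although written as a loop over all $m' \neq m$, touches only the one machine currently holding $j$; so lines~5--6 reduce to updating $j$'s partner field together with the loads of the old and the new machine, in $O(1)$.

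To make the loop test in line~2 cheap, the plan is to maintain a worklist (a queue suffices) of the machines $m$ with $x(m) < q(m)$ and a non-empty list, guarded by a per-machine boolean to prevent duplicate entries; then line~2 is an emptiness check. A machine's load increases only when it is the accepting target and decreases only when it is an old partner being dropped in line~6, so a machine can gain worklist-eligibility only by losing a job, or because it is the just-popped proposer that still has both capacity and a non-empty list. Accordingly, at the end of each iteration I would push the old partner (when it now qualifies) and re-push the popped machine $m$ (when it still qualifies). Since every execution of lines~5--6 corresponds to a distinct edge deleted in line~8, there are at most $|E|$ reassignments, hence at most $|M| + |E| + |E| = O(|E|)$ pushes in total, each $O(1)$. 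Initialization — building the lists, setting $x(jm_d) = q(j)$ for each job (treating edges off every list as having $x$-value $0$), and seeding the worklist — is $O(|E|)$. Combining: $O(|E|)$ iterations at $O(1)$ amortized cost, plus $O(|E|)$ preprocessing.

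I expect the only real obstacle to be the amortized bookkeeping for the worklist together with the small observation that line~6 is cheap; termination and the $O(|E|)$ iteration bound follow immediately from the monovariant, so essentially all of the claim's content lies in verifying that no single iteration, and no re-insertion of a machine that loses a job, can be charged more than $O(1)$.
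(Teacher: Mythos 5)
Your proof is correct and rests on the same core observation as the paper's: each iteration of the while loop deletes exactly one job from one machine's preference list, so there are at most $|E|$ iterations. The paper states only this one-line counting argument, while you additionally spell out the (standard) linked-list, partner-field, and worklist bookkeeping needed to make each iteration $O(1)$ amortized; that extra detail is accurate but does not change the approach.
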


\begin{proof}
In each step, a job is deleted from a machine's preference
list. 
\end{proof}

\begin{claim}
The algorithm produces an allocation $x$ that is a relaxed unsplit
stable assignment.
\end{claim}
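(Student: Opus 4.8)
The plan is to first establish a monotonicity invariant for the run of the algorithm, then verify relaxed feasibility, and finally verify stability by a case analysis on potential blocking edges; everything follows by tracking how the assignment evolves.

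First I would record the loop invariants. At every step each job $j$ has $x(jm) = q(j)$ for exactly one machine $m$ (initially $m_d$) and $x(jm') = 0$ elsewhere, so $x$ is always an unsplit assignment; moreover a job changes its assigned machine only in lines~5--6, and only to a machine it strictly prefers to its current one, so the sequence of machines to which a fixed job is assigned is strictly increasing in that job's preference order. In particular, $j$'s final machine is at least as good for $j$ as any machine $j$ held at any time during the run. I would also note that a machine $m$ proposes to each job at most once, since it deletes $j$ from its list on line~8 whether or not $j$ accepts; hence $m$'s proposals run strictly down its preference list, and every job that is ever assigned to $m$ is a job that $m$ proposed to.

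Next, relaxed feasibility. Fix a non-dummy machine $m$ holding at least one job at termination, and let $j^{\ast}$ be $m$'s least-preferred such job. Consider the (unique) iteration in which $m$ proposed to $j^{\ast}$: the loop guard forces $x(m) < q(m)$ at the start of that iteration, and $j^{\ast}$ accepts there, so immediately afterwards $x(m) < q(m) + q(j^{\ast})$. After that iteration $m$ proposes only to jobs worse (for $m$) than $j^{\ast}$, so any job arriving at $m$ afterwards is worse than $j^{\ast}$ and cannot be present at termination without contradicting the choice of $j^{\ast}$; since jobs leaving $m$ only decrease $x(m)$, the set of jobs at $m$ at the end is contained in the set present just after the $j^{\ast}$-iteration, whence $x(m)$ at termination is at most its value at that moment, i.e. $x(m) - q(j^{\ast}) < q(m)$. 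This is exactly the relaxed-unsplit condition (for the dummy machine it is vacuous given its large capacity).

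Finally, stability. Let $jm$ be an edge with $x(jm) = 0$ at termination. If $m$ never proposed to $j$, then $j$ is still on $m$'s list, so the loop guard forces $x(m) \geq q(m)$, and every job $m$ did propose to --- in particular every job currently at $m$ --- is preferred by $m$ to $j$; hence $m$'s quota is met or exceeded by jobs $m$ prefers to $j$. If instead $m$ did propose to $j$, then, since $x(jm) = 0$ at the end, $j$ either rejected that proposal (because at that moment $j$ was already at a machine it preferred to $m$) or accepted it and later left (because some machine $j$ prefers to $m$ proposed); either way the monotonicity invariant gives that $j$'s final machine is one $j$ prefers to $m$. In both cases $jm$ does not block, so $x$ is a relaxed unsplit stable assignment. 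I expect the relaxed-feasibility step to be the main obstacle, as it is the only place where one must argue carefully that $x(m)$ cannot climb back above $q(m) + q(j^{\ast})$ after $j^{\ast}$ is placed --- that is, that later, worse proposals accepted by $m$ never persist to termination.
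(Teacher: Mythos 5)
Your proof is correct and follows essentially the same route as the paper's: monotone improvement of jobs plus the loop guard gives relaxed feasibility (the machine was below quota when it proposed to its eventually-worst job), and the case analysis on whether $m$ ever proposed to $j$ gives stability. You simply spell out more carefully the steps the paper compresses (e.g., that jobs arriving at $m$ after $j^{\ast}$ cannot persist, and the accepted-then-left subcase), which is fine.
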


\begin{proof}
First, we check the three feasibility constraints for~$x$. Since
proposals are always made with $q(j)$ and refusals are always full
rejections, the quota constraints of the jobs may not be
violated. Moreover, each job is assigned to exactly one
machine. Machines can be over-capacitated, but deleting the worst job
from their preference list results in an allocation under their quota. Otherwise the machine would not have proposed along
the last edge.
	
If $x$ is unstable, then there is an empty edge $jm$
blocking~$x$. During the execution, $m$ must have proposed
to~$j$. This offer was rejected, because $j$ already had a better
partner in the current allocation. Since jobs monotonically improve
their position in the assignment, this leads to a contradiction.
\end{proof}

\begin{claim}
The output $x$ is the machine-optimal relaxed unsplit stable
assignment. That is, no machine has a better lexicographical position
in any other relaxed unsplit stable assignment than in~$x$.
\end{claim}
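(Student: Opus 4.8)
The plan is to prove the statement in two stages. First I would show that the allocation $x$ returned by the reversed algorithm is \emph{job-pessimal} --- every job is assigned in $x$ to the worst machine it receives in any relaxed unsplit stable allocation. Second I would deduce lexicographic machine-optimality from job-pessimality together with the stability of $x$ already established above. Going through job-pessimality is convenient because jobs are always assigned in an unsplit way, so reasoning on the job side stays simple, and lexicographic and min--min orders coincide there.

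For the first stage, the key claim is: whenever a machine $m$ proposes to a job $j$ during the run, every relaxed unsplit stable allocation assigns $j$ to a machine at least as good for $j$ as $m$. Since jobs only ever move to machines they prefer (noted in the proof of stability), the machine $j$ ends up with in $x$ is $m_d$ or a machine that proposed to $j$, so the claim immediately gives that $j$'s partner in any stable allocation is at least as good as its $x$-partner, i.e.\ $x$ is job-pessimal. I would prove the claim by strong induction on the time $t$ of the proposal. Assume it for all earlier proposals; let $m$ propose to $j$ at time $t$; fix a relaxed unsplit stable allocation $x'$ and suppose for contradiction that $x'$ assigns $j$ to a machine $p$ strictly worse for $j$ than $m$ (including $p = m_d$). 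I claim the empty edge $jm$ blocks $x'$: since $j$ prefers $m$ to $p$, it remains to show $m$'s quota is not filled in $x'$ by jobs $m$ prefers to $j$. Because $m$ processes its list top-down, by time $t$ it has proposed to every job it prefers to $j$; its current load is below $q(m)$ (that is why it proposes now) and is carried entirely by jobs it prefers to $j$. So if $m$'s quota were filled in $x'$ by jobs it prefers to $j$, a total-load comparison yields a job $\hat{j}$ that $m$ prefers to $j$, assigned to $m$ in $x'$, but not assigned to $m$ at time $t$. As $m$ had already proposed to $\hat{j}$ and $\hat{j}$ is not with $m$ now, $\hat{j}$ either declined $m$ or later abandoned $m$; in both cases $\hat{j}$ was, at some time before $t$, assigned to a machine $m_1$ it prefers to $m$, so $m_1$ proposed to $\hat{j}$ before $t$. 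The induction hypothesis then forces $x'$ to assign $\hat{j}$ to a machine at least as good for $\hat{j}$ as $m_1$, hence strictly better than $m$, contradicting that $\hat{j}$ is assigned to $m$ in $x'$. Thus $jm$ blocks $x'$, contradicting stability, and the claim is proved.

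For the second stage, suppose some machine $m$ strictly prefers a relaxed unsplit stable allocation $x'$ to $x$, and let $jm$ be the best edge incident to $m$ in $x \triangle x'$; then $x'(jm) = q(j)$, $x(jm) = 0$, and $x$ and $x'$ assign the same jobs to $m$ among those $m$ prefers to $j$. In $x'$, $m$ holds $j$ together with jobs it prefers to $j$ of total load $\ell$; removing $m$'s least-preferred job from its $x'$-allocation must bring $m$ below $q(m)$ (the strict form of relaxed feasibility), and since that job is $j$ or worse, this forces $\ell < q(m)$. By the agreement above $jm$, the load on $m$ from jobs it prefers to $j$ is also below $q(m)$ in $x$, so $m$'s quota is not filled in $x$ by jobs it prefers to $j$. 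Meanwhile $m$ is a stable partner of $j$ (witnessed by $x'$), so job-pessimality makes $j$'s $x$-partner strictly worse for $j$ than $m$, i.e.\ $j$ prefers $m$ to its partner in $x$. Hence $jm$ is an empty edge of $x$ that blocks $x$, contradicting the stability of $x$. Therefore no machine has a strictly better lexicographic position in any other relaxed unsplit stable allocation, which is the statement.

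I expect the main obstacle to be the inductive step of the first stage, where the single-partner bookkeeping of classical Gale--Shapley must be replaced by load accounting: because a machine may want several jobs, one cannot simply talk about ``$m$'s rejection by $j$'' but must compare the total loads on $m$ in $x'$ and in the run to extract a job $\hat{j}$ that $m$ holds in $x'$ but has already lost (or never gained), and then verify carefully that the machine $m_1$ which displaced $\hat{j}$ really proposed to it \emph{before} time $t$, so the induction hypothesis applies. A secondary point requiring care is that the strict-inequality form of relaxed feasibility is used essentially in the second stage --- it is exactly what gives $\ell < q(m)$ and hence the blocking edge.
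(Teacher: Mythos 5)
Your proof is correct, but it takes a genuinely different route from the paper's. The paper argues directly on the machine side with a minimal-counterexample argument: among the edges of $x'\setminus x$ that some machine would prefer, it takes the \emph{first} one rejected during the run of the reversed algorithm, and shows the job's better partner $m_0$ at that moment must have had one of its $x'$-edges rejected even earlier, contradicting minimality. You instead first prove job-pessimality of $x$ via an induction on proposal times (every relaxed unsplit stable allocation assigns a job at least as well as any machine that ever proposes to it), and then derive lexicographic machine-optimality from job-pessimality plus the stability of $x$, by exhibiting a blocking edge $jm$ for $x$ at the best edge of $x\triangle x'$ incident to a machine that would prefer $x'$. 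Your decomposition in effect proves, along the way, the job-pessimality half of the paper's Theorem~\ref{th:opt_pess} (which the paper establishes separately, by an argument close to your Stage~1); the induction on proposal times with the total-load comparison to extract the displaced job $\hat{\jmath}$ is a cleaner and more explicit version of the bookkeeping the paper leaves implicit, and your Stage~2 makes visible exactly where the strict form of relaxed feasibility is needed ($\ell < q(m)$). The paper's route is shorter but terser; yours is longer but more modular and yields the optimal/pessimal duality as a byproduct.
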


\begin{proof}
Assume that there is a relaxed unsplit stable assignment~$x'$, where
some machines come better off than in~$x$. To be more precise, in the
symmetric difference $x \triangle x'$, the best edge incident to these
machines belongs to~$x'$. When running the reversed relaxed unsplit
Gale-Shapley algorithm, there is a step when the first such edge
$jm_1$ carries a proposal from $m_1$ but gets rejected. Otherwise,
$m_1$ filled up or exceeded its quota in $x$ with only better edges
than~$jm_1$. Let us consider only this edge first and denote the
feasible, but possibly unstable relaxed allocation produced by the
algorithm so far by~$x_0$.

When $j$ refused~$jm_1$, it already had a partner $m_0$ in~$x_0$,
better than~$m_1$. Even if there is no guarantee that $jm_0 \in x$, it
is sure that $jm_0 \notin x'$ and $jm_0$ does not block~$x'$, though
$\text{rank}_j(jm_0) > \text{rank}_j(jm_1)$ for~$jm_1 \in x'$. It is
only possible if $m_0$ is saturated or over-capacitated in $x'$ with
edges better than~$jm_0$. Since $jm_0 \in x_0$, $x_0$ may not contain
all of these edges, otherwise $m_0$ is congested in $x_0$ beyond the
level required for a relaxed unsplit assignment. During the execution
of the reversed relaxed unsplit Gale-Shapley algorithm, $m_0$ proposed
along all of these edges and got rejected by at least one of
them. This edge is never considered again, it may not enter $x$
later. Thus, $jm_1$ is not the first edge in $x' \setminus x$ that was
rejected in the algorithm.
\end{proof}

With this, we completed the constructive proof of the following theorem:

\begin{theorem}
\label{m_opt}
The machine-optimal relaxed unsplit stable assignment
$x_{\text{mopt}}$ can be computed in $O(|E|)$ time.
\end{theorem}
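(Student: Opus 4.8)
The plan is to obtain Theorem~\ref{m_opt} as an immediate consequence of the three claims already established for the reversed relaxed unsplit Gale--Shapley procedure of Figure~\ref{rev_gs}; essentially all of the content lives in those claims, and in particular in the machine-optimality one. First I would point out that the procedure is completely explicit --- initialize $x$ by routing every job to the dummy machine $m_d$, then repeatedly pick any machine $m$ with $x(m) < q(m)$ and a non-empty list, let it propose to its current best job $j$, and reassign $j$ to $m$ if $j$ accepts --- and that the first claim bounds its running time: every iteration permanently deletes one pair $(j,m)$ from a preference list, so the number of iterations is at most $\sum_{m \in M} |\mathrm{list}(m)| = O(|E|)$, and with the preference lists stored as arrays with pointers each iteration costs $O(1)$.

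Next I would invoke the second claim for correctness. Because a machine always proposes with the whole of $q(j)$ and a job's refusal is a full rejection, at every moment $x$ keeps each job assigned, in one piece, to a single machine, so the job quota constraints hold and each job contributes $q(j)$ to exactly one machine; and a machine is ever over its quota only on account of the \emph{last} job it was allowed to accept, so deleting that worst job pulls $x(m)$ strictly below $q(m)$, which is exactly relaxed feasibility. Stability is equally short: an empty edge $jm$ cannot block, since at some step $m$ proposed to $j$ and was refused in favour of a machine $j$ likes at least as much, and a job's partner only improves over the run of the algorithm, so $j$ still strictly prefers its final partner to $m$.

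The one genuinely delicate step --- and the step I expect to be the main obstacle --- is the third claim, that the output is machine-optimal under the lexicographic order. The argument is the familiar ``first blocked proposal'' induction from the classical Gale--Shapley optimality proof, but it has to be run in the presence of multi-unit assignment and of over-capacitated machines. Assuming toward a contradiction a relaxed unsplit stable $x'$ in which some machines fare strictly better, one isolates the earliest moment at which the algorithm rejects an edge $jm_1 \in x'$ that is lexicographically best among the edges on which $x$ and $x'$ disagree at $m_1$; the job $j$ refused $m_1$ only because it already held a strictly better partner $m_0$ in the partial allocation $x_0$ built so far. Since $j$ is assigned unsplit, $jm_0 \notin x'$, yet $jm_0$ does not block $x'$ even though $\text{rank}_j(jm_0) > \text{rank}_j(jm_1)$ with $jm_1 \in x'$; hence $m_0$ must be filled or over its quota in $x'$ with edges it prefers to $jm_0$. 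But $x_0$ already contains $jm_0$, so it cannot also contain all of those better edges without exceeding the relaxed congestion bound at $m_0$; therefore $m_0$ proposed along them all --- which it did before ever proposing to $j$, hence before the moment $jm_1$ was rejected --- and was turned down on at least one, an edge of $x' \setminus x$ rejected strictly earlier than $jm_1$, contradicting the choice of $jm_1$. I would also note that, because the lattice machinery available for splittable stable allocations is not available here, this construction does double duty: it is exactly what shows a machine-optimal solution exists at all, so that the object $x_{\text{mopt}}$ named in the statement is well defined. Combining the three claims, the algorithm of Figure~\ref{rev_gs} computes $x_{\text{mopt}}$ in $O(|E|)$ time.
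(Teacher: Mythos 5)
Your proposal is correct and follows essentially the same route as the paper: the same three claims about the reversed relaxed unsplit Gale--Shapley algorithm (termination via permanent deletion from preference lists, feasibility and stability of the output, and machine-optimality via the ``first rejected edge of $x'$'' argument with the job's better partner $m_0$ in the partial allocation $x_0$ forcing an earlier rejection). The only addition worth keeping is your explicit remark that the optimality argument also establishes the \emph{existence} of $x_{\text{mopt}}$, which the paper leaves implicit.
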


\subsection{Properties of the Job- and Machine-Optimal Solutions}

\begin{theorem}
\label{th:opt_pess}
The job-optimal relaxed unsplit stable assignment $x_{\text{jopt}}$ is
the machine-pessimal relaxed unsplit stable assignment and vice versa,
the machine-optimal relaxed unsplit stable assignment
$x_{\text{mopt}}$ is the job-pessimal relaxed unsplit stable
assignment.
\end{theorem}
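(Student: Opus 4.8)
The plan is to treat the two halves of the statement separately. Because of the asymmetry between the sides — jobs are always assigned in one piece, whereas machines carry capacities and may be over-capacitated — one direction is a routine Gale--Shapley-style blocking-pair argument, while the other needs the algorithm itself. So for ``$x_{\text{jopt}}$ is machine-pessimal'' I would argue by a blocking pair, and for ``$x_{\text{mopt}}$ is job-pessimal'' I would argue about the run of the reversed algorithm of Figure~\ref{rev_gs}.

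\emph{$x_{\text{jopt}}$ is machine-pessimal.} I would suppose some machine $m$ has a strictly better lexicographic position in a relaxed unsplit stable assignment $x'$ than in $x_{\text{jopt}}$, and take $jm$ to be the best edge incident to $m$ in $x_{\text{jopt}}\triangle x'$, so that $jm\in x_{\text{jopt}}$. Job-optimality of $x_{\text{jopt}}$ forces $j$ to be no better off in $x'$, and since $jm\notin x'$ it is strictly better off in $x_{\text{jopt}}$; hence $j$ prefers $m$ to its $x'$-partner. Then the empty edge $jm$ fails to block $x'$ only because $m$ is saturated or over-capacitated in $x'$ with jobs it prefers to $j$, and — since $jm$ is the best edge at $m$ in the symmetric difference — these are exactly the jobs preferred to $j$ that are assigned to $m$ in $x_{\text{jopt}}$, so together with $j$ they give $x_{\text{jopt}}(m)\ge q(m)+q(j)$. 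Deleting $m$'s least-preferred job (which is $j$ or worse) still leaves $m$ at or above capacity, contradicting the relaxed unsplit condition. A symmetric blocking-pair argument for the second direction would not close, since there is no job-side relaxation to push against and the chain of displaced jobs it produces need not terminate.

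\emph{$x_{\text{mopt}}$ is job-pessimal.} Here I would argue about the run of the reversed relaxed unsplit Gale--Shapley algorithm, exploiting that each job's $x_{\text{mopt}}$-partner is precisely the best machine that ever proposed to it (a job's tentative partner is the running maximum, in its own ranking, over $m_d$ and its proposers, and every real machine beats $m_d$). Assume some job is strictly worse off in $x'$ than in $x_{\text{mopt}}$; then the run contains at least one \emph{over-shooting} step, i.e.\ a step where a machine $m$ proposes to a job $j$ with $j$ preferring $m$ to its $x'$-partner and $j$ not assigned to $m$ in $x'$. Let $m\to j$ be the earliest such step. Stability of $x'$ forces $m$ to be saturated or over-capacitated in $x'$ by a set $S$ of jobs it prefers to $j$; as $m$ proposes down its preference list, it had already proposed to every job of $S$ before proposing to $j$, yet at that moment $m$ is \emph{strictly} under capacity and therefore cannot currently hold all of $S$ (whose total size is at least $q(m)$). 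Picking $i\in S$ not currently held by $m$, the job $i$ is by then parked at some machine $\tau$ with $\text{rank}_i(i\tau)>\text{rank}_i(im)$, and the proposal $\tau\to i$ that put it there occurred before $m\to j$; since $i$'s $x'$-partner is $m$ (as $i\in S$), the step $\tau\to i$ is itself over-shooting, contradicting the minimality of $m\to j$.

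The hard part will be this second direction — in particular, recognizing that the problem's asymmetry blocks the naive dualization of the first argument and that passing to the algorithm is what rescues it. The one crucial observation is that a machine proposing to $j$ is strictly below $q(m)$ while having already exhausted all jobs it prefers to $j$, hence in particular all of $S$; this is exactly what forces some job of $S$ to be sitting at a strictly better machine, which is what lets the minimality argument step back in time and finish. I would also have to verify the routine edge cases: the dummy machine $m_d$ never issues an over-shooting proposal, a job's tentative partner only ever improves, and the instance is the same for $x_{\text{mopt}}$ and $x'$ so all the relevant edges exist.
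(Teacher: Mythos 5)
Your proof is correct and follows essentially the same two-part route as the paper: a counting/blocking-edge argument from job-optimality for machine-pessimality, and a ``first offending proposal'' analysis of the reversed Gale--Shapley run for job-pessimality (the paper calls the set of your over-shooting edges $E'$ and likewise uses the saturating set of $x'$ plus $m$'s under-capacity at proposal time to trace back to an earlier such proposal). Two minor remarks: the opening sentence of your first part states the negation backwards --- you must assume $m$ is strictly \emph{better} off in $x_{\text{jopt}}$ than in $x'$, which is what your subsequent ``$jm \in x_{\text{jopt}}$'' actually uses --- and your choice of the witness $i \in S$ from the algorithm's state at the moment of the proposal, rather than from the final $x_{\text{mopt}}$ as in the paper, handles the case where $i$ first accepted $m$ and later left a bit more explicitly.
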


\begin{proof}

We start with the first statement. Suppose that there is a relaxed
unsplit stable assignment $x'$ that is worse for some machine $m$
than~$x_{\text{jopt}}$.  This is only possible if $m$'s best edge $jm$
in $x_{\text{jopt}} \triangle x'$ belongs to~$x_{\text{jopt}}$. Since
$x_{\text{jopt}}$ is the job-optimal solution, $jm'$, $j$'s edge in
$x'$ is worse than~$jm$. But then, $m$ is saturated or
over-capacitated in $x'$ with better edges than~$jm$. We assumed that
all edges in $x'$ that are better than $jm$ are also
in~$x_{\text{jopt}}$. Thus, omitting $m$'s worst job from
$x_{\text{jopt}}$ leaves $m$ at or over its quota.

The second half of the theorem can be proved similarly, using the
reversed Gale-Shapley algorithm. Assume that there is a relaxed
unsplit stable assignment $x'$ that assigns some jobs to worse
machines than~$x_{\text{mopt}}$ does. Let us denote the set of edges
preferred by any job to its allocated machine in $x'$ by~$E'$. Due to
our indirect assumption, $E'$ contains some edges
of~$x_{\text{mopt}}$. When running the reversed Gale-Shapley algorithm
on the instance, there is an edge $jm \in E'$ that is the first edge
in $E'$ carrying a proposal. Since $j$ is not yet matched to a better
machine, it also accepts this offer. Even if $jm \notin
x_{\text{mopt}}$, $j$'s edge in $x_{\text{mopt}}$ is at least as good
as~$m$, because jobs always improve their position during the course
of the reversed Gale-Shapley algorithm. On the other hand, $m$ cannot
fulfill its quota in $x_{\text{mopt}}$ with better edges than~$jm$,
simply because the proposal step along $jm$ took place.

Since $jm \notin x'$, but $j$ prefers $jm$ to its edge in $x'$, $m$ is
saturated or over-capacitated with better edges than $jm$ in~$x'$. As
observed above, not all of these edges belong
to~$x_{\text{mopt}}$. Let us denote one of them in $x' \setminus
x_{\text{mopt}}$ by~$j'm$. Before proposing along $jm$, $m$ submitted
an offer to $j'$ that has been refused. The only reason for such a
refusal is that $j'$ has already been matched to a better
machine~$m'$. But since $j'm \in x'$, $j'm' \in E'$. This contradicts
to our indirect assumption that $jm$ is the first edge in $E'$ that
carries a proposal. \end{proof}

Theorem \ref{thm:card} also follows from the proof above.  

We note that although we can compute the job-optimal and
machine-optimal relaxed unsplit stable allocations, there in general
does not appear to be an obvious underlying lattice structure behind
relaxed unsplit solutions.  As already mentioned above, stable
matchings and fractional stable allocations form a distributive
lattice. In those cases, computing the meet or join of two solutions
is fairly easy. In order to reach the join of $x_1$ and $x_2$, all
machines choose the better edge set out of those two
allocations~\cite{egres-09-11}. Similarly, for meet, 
jobs are allowed to chose.  The example in Figure \ref{label3}
illustrates that this property does not carry over to relaxed unsplit
assignments. If all jobs chose the better allocation, $m_3$ remains
empty and $j_7m_3$ becomes blocking. Similar examples can easily be
constructed to show that choosing the worse assignment also can lead
to instability.

Our ability to compute $x_{\text{mopt}}$ in $O(|E|)$ time now gives us
a linear-time method for solving the (non-relaxed) unsplittable stable
allocation problem (according to our stricter notion of stability).

\begin{lemma}
\label{le:mopt_unspl}
If an instance $\mathcal{I}$ admits an unsplit stable assignment~$x$,
then the machine-optimal relaxed unsplit stable assignment
$x_{\text{mopt}}$ on the corresponding relaxed instance $\mathcal{I}'$
is also an unsplit stable assignment on~$\mathcal{I}$.
\end{lemma}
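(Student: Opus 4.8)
The plan is to show that $x_{\text{mopt}}$, which by construction is a relaxed unsplit stable assignment on $\mathcal{I}'$, is in fact already feasible on $\mathcal{I}$ (no machine exceeds its quota). Stability then transfers for free: on any assignment, a relaxed blocking edge and an ordinary unsplit blocking edge are asked for under exactly the same condition except that ``$m$'s quota filled or exceeded with better jobs'' replaces ``$m$ saturated with better jobs'', and the former is weaker, so ``no relaxed blocking edge'' implies ``no unsplit blocking edge''. So the whole content is the feasibility claim. As preparation I would record that $x$, being feasible and unsplit on $\mathcal{I}$, is itself a relaxed unsplit stable assignment on $\mathcal{I}'$ (relaxed feasibility is weaker, and on a feasible assignment the two blocking notions coincide), hence by job-pessimality of $x_{\text{mopt}}$ (Theorem~\ref{th:opt_pess}) every job $j$ is assigned in $x$ to a machine at least as good, for $j$, as the one it gets in $x_{\text{mopt}}$. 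Write $\sigma_x(j)$, resp.\ $\sigma_*(j)$, for the machine to which $j$ is assigned in $x$, resp.\ in $x_{\text{mopt}}$; thus either $\sigma_x(j)=\sigma_*(j)$ or $j$ strictly prefers $\sigma_x(j)$ to $\sigma_*(j)$.

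The crux is the observation: \emph{if $\sigma_x(j)\neq\sigma_*(j)$ then $x_{\text{mopt}}(\sigma_x(j))\geq q(\sigma_x(j))$.} Indeed, in this case $j$ strictly prefers $\sigma_x(j)$ to its allocated machine in $x_{\text{mopt}}$, and since $x_{\text{mopt}}$ is unsplit the edge $j\,\sigma_x(j)$ carries no load in $x_{\text{mopt}}$; as $x_{\text{mopt}}$ is relaxed stable, this edge cannot block, so $\sigma_x(j)$ must be filled to or beyond its quota in $x_{\text{mopt}}$ with jobs it prefers to $j$, whence $x_{\text{mopt}}(\sigma_x(j))\geq q(\sigma_x(j))$. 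Let $\mathcal{K}=\{m : x_{\text{mopt}}(m)\geq q(m)\}$ be the set of saturated-or-over-capacitated machines of $x_{\text{mopt}}$. The observation, together with the trivial case $\sigma_x(j)=\sigma_*(j)$, yields the containment of job sets
\[
\{\, j : \sigma_*(j)\in\mathcal{K}\,\}\ \subseteq\ \{\, j : \sigma_x(j)\in\mathcal{K}\,\}.
\]

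Finally I would sum job sizes over the two sides of this containment and regroup by receiving machine, using that each job contributes its size to the load of exactly one machine in each assignment; this gives $\sum_{m\in\mathcal{K}} x_{\text{mopt}}(m)\leq\sum_{m\in\mathcal{K}} x(m)$. On the other hand, for every $m\in\mathcal{K}$ we have $x(m)\leq q(m)\leq x_{\text{mopt}}(m)$ (the first inequality because $x$ is feasible on $\mathcal{I}$, the second by definition of $\mathcal{K}$), so $\sum_{m\in\mathcal{K}} x(m)\leq\sum_{m\in\mathcal{K}} q(m)\leq\sum_{m\in\mathcal{K}} x_{\text{mopt}}(m)$. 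Hence all three sums are equal, which forces $x_{\text{mopt}}(m)=q(m)$ for every $m\in\mathcal{K}$; so no machine of $x_{\text{mopt}}$ is over-capacitated, and $x_{\text{mopt}}$ is an unsplit stable assignment on $\mathcal{I}$, proving Lemma~\ref{le:mopt_unspl}. I expect the main obstacle to be spotting this global counting step: the more hands-on approach — starting at an over-capacitated machine and following an alternating chain of jobs that move to strictly better machines in $x$ — does not obviously terminate, since such a chain can close into an alternating cycle between $x$ and $x_{\text{mopt}}$ that is perfectly consistent with the stability of both assignments; summing over the whole set $\mathcal{K}$ at once is what sidesteps having to rule such cycles out.
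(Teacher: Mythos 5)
Your proof is correct, but it reaches the key feasibility claim by a genuinely different route than the paper. The paper argues by contradiction: assuming some machine $m_1$ is over-capacitated in $x_{\text{mopt}}$ (so $x_{\text{mopt}}(m_1) > q(m_1) \geq x(m_1)$), it invokes Theorem~\ref{thm:card} (total allocation is minimized at $x_{\text{mopt}}$) to produce a machine $m_2$ with $x_{\text{mopt}}(m_2) < x(m_2) \leq q(m_2)$; since $m_2$ is then unsaturated in $x_{\text{mopt}}$, relaxed stability forces the job on an edge of $x \setminus x_{\text{mopt}}$ at $m_2$ to sit on a strictly better machine in $x_{\text{mopt}}$ than in $x$, contradicting job-pessimality (Theorem~\ref{th:opt_pess}). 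You use the same two ingredients --- job-pessimality and relaxed stability of $x_{\text{mopt}}$ --- but in the opposite order: first the pointwise observation that every job assigned differently in the two solutions lands, in $x$, on a machine of $\mathcal{K}$, and then a counting argument restricted to $\mathcal{K}$ in place of the global count hiding inside Theorem~\ref{thm:card}. This buys you independence from Theorem~\ref{thm:card}, a direct rather than contradiction-based argument, and the slightly stronger conclusion that every machine at or above quota in $x_{\text{mopt}}$ is \emph{exactly} saturated; the paper's version is shorter once Theorem~\ref{thm:card} is already in hand. Your handling of the stability transfer and of the fact that $x$ is itself a relaxed unsplit stable assignment on $\mathcal{I}'$ matches the paper's. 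One small wording point: rather than calling the relaxed non-blocking condition ``weaker,'' it is cleaner to say (as you also do parenthetically) that the two blocking notions coincide on any feasible assignment, which is all the transfer step needs.
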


\begin{proof}
Suppose the statement is false, e.g.\ although there is an unsplit
stable assignment~$x$, $x_{\text{mopt}}$ is no unsplit stable
assignment on~$\mathcal{I}$. This can be due to two reasons: either
the feasibility or the stability of $x_{\text{mopt}}$ is harmed
on~$\mathcal{I}$. The latter case is easier to handle. An allocation
that is feasible on both instances and stable on $\mathcal{I}'$ may
not be blocked by any edge on~$\mathcal{I}$, since the set of
unsaturated edges is identical on both instances. The second case,
namely if $x_{\text{mopt}}$ violates some feasibility constraint
on~$\mathcal{I}$, needs more care.
	
$\mathcal{I}$ and $\mathcal{I'}$ differ only in the constraints on the
quota of machines. If $x_{\text{mopt}}$ is infeasible
on~$\mathcal{I}$, then there is a machine $m$ for which
$x_{\text{mopt}}(m_1) > q(m_1)$. Regarding the unsplit stable
assignment~$x$, the inequality $x(m_1) \leq q(m_1)$ trivially
holds. Now we use Theorem~\ref{thm:card} for $x$ and $x_{\text{mopt}}$
that are both relaxed unsplit stable assignments
on~$\mathcal{I}'$. This corollary implies that if there is a machine
$m_1$ with $x_{\text{mopt}}(m_1) > x(m_1)$, then another machine $m_2$
exists for which $x_{\text{mopt}}(m_2) < x(m_2)$ holds.
	
This machine $m_2$ plays a crucial role in our proof. It has a lower
allocation value in the machine-optimal relaxed solution $x_{\text{mopt}}$
than in another relaxed stable solution $x$ on~$\mathcal{I}$. Its
lexicographical position can only be better in $x_{\text{mopt}}$ than
in $x$ if the best edge $j_2 m_2$ in $x \triangle x_{\text{mopt}}$
belongs to~$x_{\text{mopt}}$. Moreover, $x \triangle x_{\text{mopt}}$
also contains an edge $j_3 m_2 \in x$, otherwise $x_{\text{mopt}}(m_2)
> x(m_2)$. Naturally, $\text{rank}_m(j_2 m_2) < \text{rank}_m(j_3
m_2)$. At this point, we use the property that $x_{\text{mopt}}(m_2) <
q(m_2)$. Since $m_2$ has free quota in $x_{\text{mopt}}$ and $j_3 m_2$
is not a blocking edge, $j_3$ must be matched to a machine better than
$m_2$ in~$x_{\text{mopt}}$. Thus, there is a job that comes better off
in the machine-optimal (and job-pessimal) relaxed solution than in another
relaxed stable solution. This contradiction to
Theorem~\ref{th:opt_pess} finishes our proof.
\end{proof}


Lemma~\ref{le:mopt_unspl} shows that if there is an unsplit solution,
it can be found in linear time by computing the machine-optimal
relaxed solution. Unfortunately, the existence of such an unsplit
assignment is not guaranteed. Our next result applies to the case when
no feasible unsplit solution can be found. In terms of congestion,
with the machine-optimal solution we come as close as possible to
feasibility.

\begin{theorem}
	Amongst all relaxed unsplit stable solutions, $x_{\text{mopt}}$ has the lowest total congestion.
\end{theorem}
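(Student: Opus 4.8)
The plan is to reduce the statement, via a one‑line bookkeeping identity, to exactly the configuration already handled in the proof of Lemma~\ref{le:mopt_unspl}, and then to re‑run that argument. Write the total congestion of a relaxed unsplit assignment $x$ as $\mathrm{cong}(x)=\sum_{m\neq m_d}\max\!\left(0,x(m)-q(m)\right)$. Since every job is assigned to exactly one (real or dummy) machine and jobs on $m_d$ contribute nothing to $|x|$, we have $|x|=\sum_{m\neq m_d}x(m)$, and on each real machine $x(m)=\min(x(m),q(m))+\max(0,x(m)-q(m))$; summing over real machines gives
\[
\mathrm{cong}(x)\;=\;|x|\;-\;\sum_{m\neq m_d}\min\!\left(x(m),q(m)\right).
\]

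First I would argue by contradiction: suppose some relaxed unsplit stable $x'$ satisfies $\mathrm{cong}(x')<\mathrm{cong}(x_{\text{mopt}})$. Subtracting the identity written for $x_{\text{mopt}}$ from the one written for $x'$ yields
\[
\sum_{m\neq m_d}\Bigl(\min(x_{\text{mopt}}(m),q(m))-\min(x'(m),q(m))\Bigr)\;=\;\bigl(|x_{\text{mopt}}|-|x'|\bigr)+\bigl(\mathrm{cong}(x')-\mathrm{cong}(x_{\text{mopt}})\bigr).
\]
The first bracket on the right is $\leq 0$ by Theorem~\ref{thm:card} ($|x|$ is minimized at $x_{\text{mopt}}$) and the second is strictly negative by assumption, so the left‑hand sum is strictly negative; hence there is a (necessarily real) machine $m_2$ with $\min(x_{\text{mopt}}(m_2),q(m_2))<\min(x'(m_2),q(m_2))$. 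This already forces $\min(x_{\text{mopt}}(m_2),q(m_2))<q(m_2)$, so $x_{\text{mopt}}(m_2)<q(m_2)$ — machine $m_2$ has free quota in $x_{\text{mopt}}$ — and also $x_{\text{mopt}}(m_2)=\min(x_{\text{mopt}}(m_2),q(m_2))<\min(x'(m_2),q(m_2))\le x'(m_2)$.

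This is precisely the situation used in Lemma~\ref{le:mopt_unspl}, and I would finish identically. Since $x_{\text{mopt}}(m_2)<x'(m_2)$ and both quantities are sums of $q(j)$ over the jobs unsplittably assigned to $m_2$, the set of jobs at $m_2$ in $x'$ is not contained in the set of jobs at $m_2$ in $x_{\text{mopt}}$, so some job $j_3$ is assigned to $m_2$ in $x'$ but to a different machine $m^\ast$ in $x_{\text{mopt}}$. The edge $j_3m_2$ is empty in $x_{\text{mopt}}$, and because $m_2$ has free quota there, stability of $x_{\text{mopt}}$ forces $j_3$ to be assigned in $x_{\text{mopt}}$ to a machine it prefers to $m_2$, namely $m^\ast$. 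Thus $j_3$ is strictly better off in $x_{\text{mopt}}$ than in $x'$, contradicting that $x_{\text{mopt}}$ is the job‑pessimal relaxed unsplit stable assignment (Theorem~\ref{th:opt_pess}). We conclude $\mathrm{cong}(x_{\text{mopt}})\le\mathrm{cong}(x')$ for every relaxed unsplit stable $x'$.

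The only genuinely new ingredient over Lemma~\ref{le:mopt_unspl} is the identity $\mathrm{cong}(x)=|x|-\sum_{m\neq m_d}\min(x(m),q(m))$: it is what turns the hypothesis "$x_{\text{mopt}}$ is the most congested" into "on aggregate, $x_{\text{mopt}}$ stays further below the machine quotas", and it is precisely this latter phrasing that the feasibility argument of Lemma~\ref{le:mopt_unspl} was designed to refute. Everything after locating $m_2$ is routine; I would only recall that $m_d$ is never over‑capacitated (its capacity is chosen large), so every machine under discussion is real, and that strict preferences make the comparison "$j_3$ prefers $m^\ast$ to $m_2$" unambiguous.
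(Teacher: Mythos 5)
Your proof is correct and is essentially the paper's own argument: both rest on Theorem~\ref{thm:card} (minimality of $|x_{\text{mopt}}|$) combined with the fact that a machine left under its quota in $x_{\text{mopt}}$ cannot receive more allocation in any other relaxed unsplit stable solution, which you re-derive via stability and job-pessimality exactly as the paper does in Lemma~\ref{rh_part1}. Your identity $\mathrm{cong}(x)=|x|-\sum_{m\neq m_d}\min(x(m),q(m))$ is just a tidier repackaging of the paper's split of the machines into the under-quota set $M_u$ and its complement.
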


\begin{proof}
	Let $M_u$ denote the set of machines that remain under their quota in~$x_{\text{mopt}}$. Note that $\sum_{m \notin M_u} {x_{\text{mopt}}(m)}$, the total allocation value on the remaining machines clearly determines $\sum_{m \notin M_u} {x_{\text{mopt}}(m) - q(m)}$, the total congestion of~$x_{\text{mopt}}$. Let $x$ be an arbitrary relaxed solution. Due to Theorem~\ref{thm:card}, the total allocation value is minimized at~$x_{\text{mopt}}$. Therefore, for any relaxed unsplit stable allocation $x$, the following inequalities hold:
	\begin{align*}
	\sum_{m \in M}{x(m)} &\geq \sum_{m \in M}{x_{\text{mopt}}(m)}\\
	\sum_{m \notin M_u} {x(m)} + \sum_{m \in M_u} {x(m)} &\geq \sum_{m \notin M_u} {x_{\text{mopt}}(m)} + \sum_{m \in M_u} {x_{\text{mopt}}(m)}\\
	\sum_{m \notin M_u} {x(m)} - \sum_{m \notin M_u} {x_{\text{mopt}}(m)}&\geq \sum_{m \in M_u} {x_{\text{mopt}}(m)} - \sum_{m \in M_u} {x(m)}\\
	\sum_{m \notin M_u} {(x(m)-q(m))} - \sum_{m \notin M_u} {(x_{\text{mopt}}(m)-q(m))}&\geq \sum_{m \in M_u} {x_{\text{mopt}}(m)} - \sum_{m \in M_u} {x(m)}
	\end{align*}
	
At this point, we investigate the sign of both sides of the last
inequality. The core of our proof is to show that for each $m \in M_u$
and relaxed stable solution~$x$, $x_{\text{mopt}}(m) \geq x(m)$. This
result, proved below, has two benefits. On one hand, the term
on the right hand-side of the last inequality is
non-negative. Therefore, the inequality implies that the total
congestion on machines in $M \setminus M_u$ is minimized at~$x_{\text{mopt}}$. On the other hand, no machine in $M_u$ is
over-capacitated in any relaxed solution. Thus, the total congestion
is minimized~at~$x_{\text{mopt}}$.
\end{proof}

\begin{lemma}
\label{rh_part1}
 For every $m \in M_u$ and relaxed solution~$x$, the inequality
 $x_{\text{mopt}}(m) \geq x(m)$ holds.
\end{lemma}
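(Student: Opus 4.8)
The plan is to prove the lemma by contradiction, leaning on the fact that $x_{\text{mopt}}$ is job-pessimal (Theorem~\ref{th:opt_pess}). So assume there is a machine $m \in M_u$ and a relaxed unsplit stable solution $x$ with $x(m) > x_{\text{mopt}}(m)$, and aim to contradict job-pessimality by exhibiting a job that is strictly worse off in $x_{\text{mopt}}$ than in $x$.

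The first step is to extract a witnessing edge. Since both $x$ and $x_{\text{mopt}}$ assign jobs in an unsplit fashion, $x(m)$ and $x_{\text{mopt}}(m)$ are simply the total sizes of the job sets assigned to $m$ in the two solutions; hence if every job assigned to $m$ in $x$ were also assigned to $m$ in $x_{\text{mopt}}$, we would get $x(m) \leq x_{\text{mopt}}(m)$, a contradiction. So there is a job $j$ with $x(jm) = q(j)$ but $x_{\text{mopt}}(jm) = 0$. The second step applies stability of $x_{\text{mopt}}$ to the empty edge $jm$: either $j$ is assigned in $x_{\text{mopt}}$ to a machine it prefers to $m$, or $m$'s quota is filled or exceeded in $x_{\text{mopt}}$ with jobs $m$ prefers to $j$. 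The latter alternative is impossible, because $m \in M_u$ means $x_{\text{mopt}}(m) < q(m)$, so $m$ is not saturated even by all its assigned jobs, let alone by the ones it prefers to $j$. Hence $j$ is assigned in $x_{\text{mopt}}$ to some machine $m'$ with $\text{rank}_j(jm') > \text{rank}_j(jm)$.

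The final step is the contradiction: $j$ is assigned to $m$ in $x$ and to the strictly better machine $m'$ in $x_{\text{mopt}}$, so $j$ is strictly worse off in $x_{\text{mopt}}$ than in $x$. This contradicts Theorem~\ref{th:opt_pess}, which says $x_{\text{mopt}}$ is job-pessimal, i.e.\ no job is assigned to a worse machine in $x_{\text{mopt}}$ than in any other relaxed unsplit stable solution. Therefore no such $m$ and $x$ exist, proving the lemma.

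I expect the only real obstacle to be the first step: one must be sure that $x(m) > x_{\text{mopt}}(m)$ genuinely forces a fresh edge $jm \in x \setminus x_{\text{mopt}}$, rather than just a reshuffling of the same job set among machines — this is exactly the point where unsplittability (integrality of the per-edge allocation values) is essential. One small point worth checking is that the argument is unaffected if $m$ is the dummy machine $m_d$: then $x_{\text{mopt}}(jm_d) = 0$ still means $j$ is assigned elsewhere in $x_{\text{mopt}}$, which $j$ prefers to $m_d$ since $m_d$ is the last choice of every job, so the contradiction goes through verbatim.
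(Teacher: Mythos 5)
Your proof is correct and follows essentially the same route as the paper's: extract a job $j$ with $jm \in x \setminus x_{\text{mopt}}$, use the fact that the unsaturated machine $m$ cannot block via $jm$ in $x_{\text{mopt}}$ to conclude $j$ sits at a strictly better machine there, and contradict job-pessimality (Theorem~\ref{th:opt_pess}). The only cosmetic difference is that the paper compresses the middle step into the observation that an unsaturated machine in a stable solution is unpopular, whereas you unfold the stability condition directly on the edge $jm$ — the content is identical.
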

	
\begin{proof}
Suppose that there is a machine $m \in M_u$ for which
$x_{\text{mopt}}(m) < x(m)$ for some relaxed solution~$x$. Since $m$
is unsaturated in $x_{\text{mopt}}$, it is unpopular. On the other
hand, there is at least one job $j$ for which $jm \in x \setminus
x_{\text{mopt}}$. As $m$ is unpopular in~$x_{\text{mopt}}$, $j$ is
allocated to a better machine in $x_{\text{mopt}}$ than in~$x$. Since
$x_{\text{mopt}}$ is the job-pessimal solution, we derived a
contradiction.
\end{proof}

\subsection{A Variant of the ``Rural Hospital'' Theorem}
\label{rural_hospital}
In the relaxed unsplit case, we have provided counterexamples against
an exact rural hospital theorem (e.g., where all machines have the
same amount of allocation in all relaxed unsplit allocations) or even
a weakened theorem stating that all unsaturated / congested machines
have the same status in all relaxed unsplit allocations. The examples in Figure~\ref{label6} also show that no similar property holds for the jobs' side either. Lemma~\ref{rh_part1} above however suggests an alternate variant of ``rural
hospital'' theorem that does hold.

\begin{theorem}
A machine $m$ that is not saturated in $x_{\text{mopt}}$ will not be
saturated in every relaxed unsplit stable solution, and a machine $m$
that is over-capacitated in $x_{\text{jopt}}$ must at least be
saturated in every relaxed unsplit stable solution.
\end{theorem}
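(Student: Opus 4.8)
The statement has two halves, which I would prove separately. The first half is an immediate corollary of Lemma~\ref{rh_part1}: if $m$ is not saturated in $x_{\text{mopt}}$ then $m \in M_u$, so for every relaxed unsplit stable solution $x$ we get $x(m) \le x_{\text{mopt}}(m) < q(m)$, hence $m$ is unsaturated in $x$ as well. Nothing new is needed here beyond citing Lemma~\ref{rh_part1}.

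For the second half I would argue by contradiction. Let $m$ be over-capacitated in $x_{\text{jopt}}$, so $x_{\text{jopt}}(m) > q(m)$, and suppose some relaxed unsplit stable solution $x$ has $x(m) < q(m)$. Then $m$ is unsaturated in $x$, and --- exactly as in the opening line of the proof of Lemma~\ref{rh_part1} --- an unsaturated machine in a relaxed unsplit stable solution is unpopular, i.e.\ no job prefers $m$ to its (unique) allocated machine in $x$. Since $x_{\text{jopt}}(m) > q(m) > x(m)$, not every job assigned to $m$ in $x_{\text{jopt}}$ can also be assigned to $m$ in $x$ (otherwise $x(m) \ge x_{\text{jopt}}(m)$), so there is a job $j$ with $jm \in x_{\text{jopt}} \setminus x$. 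In $x$, job $j$ sits at some machine $m' \ne m$; unpopularity of $m$ in $x$ forbids $j$ from preferring $m$ to $m'$, and strictness of preferences then forces $j$ to strictly prefer $m'$ to $m$. Hence $j$ is assigned to a strictly better machine in $x$ than in $x_{\text{jopt}}$, contradicting the job-optimality of $x_{\text{jopt}}$. Therefore every relaxed unsplit stable solution satisfies $x(m) \ge q(m)$, which is exactly the second claim. Note that both halves together, with Theorem~\ref{thm:card} and Theorem~\ref{th:opt_pess}, are all we need --- no lattice structure is invoked.

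The whole argument is short, and essentially all of its content lies in the single inference, in the second half, from ``$m$ unsaturated in $x$'' to ``some $x_{\text{jopt}}$-partner of $m$ is strictly better off in $x$.'' The ingredients that make that inference work are (i) unpopularity of unsaturated machines, supplied directly by the relaxed stability definition, and (ii) the unsplit structure, so that ``$j$ does not prefer $m$ to its partner'' combined with strict preferences yields ``$j$ strictly prefers its partner to $m$.'' The one edge case worth a sentence is the possibility $m' = m_d$: this cannot occur, since no job strictly prefers $m_d$ to anything, and the dummy machine (having high capacity) is never the over-capacitated machine $m$ either, so it never disturbs the argument.
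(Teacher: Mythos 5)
Your proposal is correct and follows essentially the same route as the paper: the first half is cited directly from Lemma~\ref{rh_part1}, and the second half rests on the existence of a job $j$ with $jm \in x_{\text{jopt}} \setminus x$ (forced by $x_{\text{jopt}}(m) > q(m) > x(m)$), strict preferences, and the interplay of job-optimality with stability of $x$. The only cosmetic difference is the direction of the contradiction --- the paper uses job-optimality to conclude $j$ prefers $m$ and exhibits $jm$ as a blocking edge of $x$, whereas you use unpopularity of the unsaturated $m$ to conclude $j$ prefers its $x$-partner and contradict job-optimality --- which is the same inference read contrapositively.
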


\begin{proof}
The first part is shown by Lemma \ref{rh_part1}.  For the second part,
consider a machine $m$ that is over-capacitated in $x_{\text{jopt}}$
but has $x(m) < q(m)$ in some relaxed unsplit allocation $x$.
Consider any job $j$ in $x_{\text{jopt}} \backslash x$, and note that
since $x_{\text{jopt}}$ is job-optimal, $j$ prefers $m$ to its partner
in $x$.  Hence, $jm$ blocks $x$.
\end{proof}

As of the jobs' side, Theorem~\ref{th:opt_pess} already guarantees that if a job is unmatched in $x_{\text{jopt}}$, then it is unmatched in all relaxed stable solutions and similarly, if it is matched in $x_{\text{mopt}}$, then it is matched in all relaxed stable solutions.


\section{Rounding Algorithms}
\label{sec:rot}

We have seen now how to compute $x_{\text{jopt}}$ and
$x_{\text{mopt}}$ in linear time.  We now describe how to find
potentially other relaxed unsplit solutions by ``rounding'' solutions
to the (fractional) stable allocation problem.  For example, this
could provide a heuristic for generating relaxed unsplit solutions
that are more balanced in terms of fairness between the jobs and
machines.  Our approach is based on augmentation around
\emph{rotations}, alternating cycles that are commonly used in stable
matching and allocation problems to move between different stable
solutions (see, e.g., \cite{DBLP:journals/algorithmica/DeanM10,GusfieldI89}).

We begin with a stable allocation $x$ with $x(j) = q(j)$ for every job
$j$, thanks to the existence of a dummy machine.  For each job $j$
that is not fully assigned to its first-choice machine, we define its
\emph{refusal edge} $r(j)$ to be the worst edge $jm$ incident to $j$
with $x(jm) > 0$. Jobs with refusal edges also have \emph{proposal
  edges} -- namely all their edges ranked better than $r(j)$.  Recall
that a machine with incoming proposal edges is said to be
\emph{popular}.  We call a machine {\em dangerous} if it is
over-capacitated and has zero assignment on all its incoming proposal
edges.

\begin{claim}
\label{claim:structure_of_h}
Consider a popular machine $m$ in some fractional stable
allocation~$x$. Amongst all proposal edges incoming to $m$, at most
one has positive allocation value in $x$, and this positive proposal
edge is ranked lower on $m$'s preference list than any other edge
into $m$ with positive allocation.
\end{claim}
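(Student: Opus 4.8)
The plan is to derive both halves of the claim from a single statement: \emph{every proposal edge $jm$ incoming to $m$ with $x(jm)>0$ is the worst positive-allocation edge into $m$}, i.e.\ it is ranked strictly below, on $m$'s preference list, every other edge into $m$ carrying positive allocation. Since a machine has a unique ``worst positive edge'', this statement immediately implies that at most one incoming proposal edge can carry positive allocation, and that such an edge sits below every other positive edge into $m$ — so once it is established there is nothing more to prove. (If $m$ has no incoming proposal edge the claim is vacuous, so I may assume one with positive allocation exists; in particular $m$ is then popular.)

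To prove the reduced statement I would fix such an edge $jm$ and first check that stability is applicable to it. By the definition of a proposal edge, $jm$ is ranked strictly above $r(j)$ on $j$'s list, and $r(j)$ is a \emph{distinct} edge carrying positive allocation; since $x(j)=q(j)$ is split over at least the two edges $jm$ and $r(j)$, we get $x(jm)<q(j)$. Using the standing normalization that any edge with $c(jm)<q(j)$ has been deleted, this gives $x(jm)<q(j)\le c(jm)$, so $jm$ is strictly below its capacity and the stability condition constrains it. (Incidentally this also rules out $x(jm)=q(j)$, so a proposal edge is never ``full''.) Moreover, because $j$ has positive allocation on $r(j)$, which is strictly worse for $j$ than $m$, job $j$ is \emph{not} fully assigned to machines ranked at least as high as $m$.

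Then I would apply stability to the non-full edge $jm$: one of its two disjuncts must hold, and the job-side disjunct has just been excluded, so $m$ must be fully assigned to jobs it ranks at least as high as $j$, i.e.\ $\sum_{j'\,:\,\text{rank}_m(j'm)\,\ge\,\text{rank}_m(jm)} x(j'm)=q(m)$. Feasibility gives $x(m)=\sum_{j'} x(j'm)\le q(m)$, and subtracting forces $x(j'm)=0$ for every $j'$ with $\text{rank}_m(j'm)<\text{rank}_m(jm)$. Hence no positive-allocation edge into $m$ is ranked below $jm$; since the preference orders are strict, $jm$ is the unique worst such edge, which is exactly the reduced statement, and the claim follows.

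I expect the only point requiring genuine care to be the justification that stability may be invoked at $jm$ — namely that $jm$ lies strictly below its capacity — which rests on $x(j)=q(j)$, on $r(j)$ being a distinct positive edge, and on the edge-deletion normalization; everything after that invocation is a one-line bookkeeping argument on $m$'s side using only $x(m)\le q(m)$ and strictness of the preferences.
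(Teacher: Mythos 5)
Your proof is correct and rests on the same mechanism as the paper's: invoke stability at a positive proposal edge $jm$, note the job-side disjunct fails because $r(j)$ is a distinct positive edge worse than $jm$, and conclude $m$ is saturated with edges it likes at least as much as $jm$, so nothing worse can be positive. The only difference is packaging — you derive both conclusions from the single statement that any positive proposal edge is the unique worst positive edge into $m$, and you are more explicit than the paper about why $x(jm) < c(jm)$ so that stability applies — but the argument is essentially the paper's.
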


\begin{proof} Let $\text{rank}_m(j_1m) > \text{rank}_m(j_2m)$ be proposal 
edges such that $x(j_1m)$ and $x(j_2m)$ are both positive. Note that
$j_1m$ blocks $x$, since $j_1$ and $m$ have worse allocated edges
in~$x$. A similar argument implies the last part of the claim.
\end{proof}

Our algorithm proceeds by a series of augmentations around rotations,
defined as follows.  We start from a popular, non-dangerous machine
$m$ (if no such machine exists, the algorithm terminates, having
reached an unsplit solution).  Since $m$ is popular and non-dangerous,
it has incoming proposal edges with positive allocation, and due to
the preceding claim, it must have exactly one such edge $jm$.  We
include $jm$ as well as $j$'s refusal edge $jm'$ in our partial
rotation, then continue building the rotation from $m'$ (again finding
an incoming proposal edge, etc.).  We continue until we close a cycle,
visiting some machine $m$ visited earlier (in which case we
keep just the cycle as our rotation, not the edges leading up to the
cycle), or until we reach a machine $m$ that is unpopular or
dangerous, where our rotation ends.

To enact a rotation, we increase the allocation on its proposal edges
by $\varepsilon$ and decrease along the refusal edges by
$\varepsilon$, where $\varepsilon$ is chosen to be as large as
possible until either (i) a refusal edge along the rotation reaches
zero allocation, or (ii) a dangerous machine at the end of the
rotation drops down to being exactly saturated from being
over-capacitated, and hence ceases to be dangerous.  We call case (i)
a ``regular'' augmentation.  This concludes the algorithm description.

\begin{claim}
\label{cl:rot_term}
The algorithm terminates after $O(|E|)$ augmentations.
\end{claim}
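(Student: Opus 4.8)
The plan is to place every augmentation into one of the two categories dictated by the stopping rule for $\varepsilon$: a \emph{regular} (case (i)) augmentation, in which some refusal edge along the rotation reaches zero allocation, or a case-(ii) augmentation, in which the dangerous machine at the end of the rotation drops from over-capacitated to exactly saturated. Every rotation contains at least one refusal edge, and refusal edges carry strictly positive allocation, so $\varepsilon$ is always a well-defined positive finite quantity and this dichotomy is exhaustive; it then suffices to show there are at most $|E|$ augmentations of the first kind and at most $|M|$ of the second.

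The first bound rests on the observation that \emph{the support of $x$ (the set of edges with positive allocation) never grows}. In enacting a rotation, the only edges that gain allocation are the proposal edges $j_i m_i$ chosen along the rotation, and by construction each is an incoming proposal edge of $m_i$ that \emph{already has positive allocation} (invoking Claim~\ref{claim:structure_of_h} for uniqueness); every other edge is either a refusal edge that loses allocation or is untouched. Hence an edge with zero allocation stays at zero forever. A regular augmentation deletes at least one edge from the support, and that edge can never return, so there are at most $|E|$ regular augmentations.

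For the second bound I would show that a machine can be the endpoint of a case-(ii) augmentation at most once. The key sub-claim is that the total load $x(m)$ of a machine increases \emph{only} when $m$ is the starting vertex of a (necessarily path-shaped) rotation: an interior machine of a rotation gains $\varepsilon$ on one incoming proposal edge and loses $\varepsilon$ on one incoming refusal edge, a machine lying on a closed-cycle rotation breaks even for the same reason, and the terminal machine only loses; but the starting machine must carry positive allocation on an incoming proposal edge. Immediately after a case-(ii) augmentation at $m$, the machine $m$ is exactly saturated and --- since it was dangerous an instant earlier and the augmentation modified only the refusal edge entering $m$ (which is distinct from every incoming proposal edge of $m$, those all having zero allocation) --- still has zero allocation on all of its incoming proposal edges; by the support invariant those edges stay at zero, so $m$ is never again a rotation start, so $x(m)$ never increases again, so $m$ is never over-capacitated, hence never dangerous, again. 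Distinct case-(ii) augmentations therefore have distinct endpoints, so there are at most $|M|$ of them, and at most $|E| + |M| = O(|E|)$ augmentations in total.

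I expect the main obstacle to be the bookkeeping behind the ``only rotation-starts increase machine load'' sub-claim and its interaction with the definition of \emph{dangerous}: one has to verify against the precise rotation-building rule exactly which edges around each machine are changed, and by how much, in the start, interior, closed-cycle, and terminal cases, and then confirm that the edges of $m$ touched by a case-(ii) augmentation avoid $m$'s incoming proposal edges, so that the support invariant applies and $m$ never re-enters the dangerous state. Establishing the exhaustiveness of the case split and the $|E|$ bound on regular augmentations should be routine once the support-monotonicity invariant is stated cleanly.
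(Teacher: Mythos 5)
Your first half coincides with the paper's: the paper also bounds regular augmentations by $|E|$ by observing that the refusal edge driven to zero is permanently retired (it ``will never again be increased or serve as a proposal or refusal edge''), which is the same fact you extract from your support-monotonicity invariant. Where you genuinely diverge is in bounding the case-(ii) augmentations. The paper does not argue that each machine terminates at most one such augmentation; instead it charges every event of a machine \emph{becoming} dangerous to some incoming refusal pointer reaching zero allocation, so dangerous-machine creations number $O(|M|+|E|)$, and each non-regular augmentation consumes one. Your route, if accepted, gives the sharper bound of $|M|$, but it leans on two points the paper's charging argument avoids. First, the support invariant holds only under the reading that \emph{every} proposal edge placed on the rotation, not just the one at the starting machine, already carries positive allocation; the main-text description supports this, but the paper's own later feasibility proof explicitly entertains a machine that ``increased the allocation on $jm$'' from zero and rules it out by a separate dangerousness argument, so the invariant is not something the paper takes for granted. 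Second, your step ``those edges stay at zero, so $m$ is never again a rotation start'' tacitly needs that no \emph{new} incoming proposal edge of $m$ with positive allocation can appear later; this does hold, because each job's refusal edge $r(j)$ only improves over time and hence proposal-edge sets only shrink, but you should state it. With those two points made explicit, both routes deliver the claimed $O(|E|)$ bound, and the paper's version is the more robust of the two precisely because it survives even if zero-allocation proposal edges can be augmented.
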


\begin{proof} 
Jobs remain fully allocated during the whole procedure, and their
lexicographical positions never worsen.  With every regular
augmentation, some edge stops being a refusal edge, and will never
again be increased or serve as a proposal or refusal edge.  We can
therefore have at most $O(|E|)$ regular augmentations.  Furthermore, a
machine can only become dangerous if one of its incoming refusal
pointers reaches zero allocation, so the number of newly-created
dangerous machines over the entire algorithm is bounded by $|E|$.
Hence, the number of non-regular augmentations is at most $O(|M| +
|E|) = O(|E|)$.
\end{proof}

\begin{claim}
The final allocation $x$ is a feasible relaxed unsplit assignment.
\end{claim}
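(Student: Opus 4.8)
The plan is to check, one at a time, the three defining conditions of a feasible relaxed unsplit assignment: that $x$ is unsplit, that $x(j)\le q(j)$ for every job, and that every machine $m$ is relaxed-feasible, i.e.\ that removing the least-preferred job assigned to $m$ leaves $x(m)<q(m)$.

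\emph{Unsplit-ness} I would read off the termination condition. Suppose at termination some job $j$ has $0<x(jm)<q(j)$. Since $j$ is fully allocated (shown below), it has at least two positive edges; let $jm^{*}$ be its best and $r(j)$ its worst. Then $m^{*}$ is popular, because $j$ prefers $m^{*}$ to the machine at the far end of $r(j)$, which is allocated to $j$; and $m^{*}$ is not dangerous, because $jm^{*}$ is ranked above $r(j)$, hence is an incoming proposal edge of $m^{*}$ with positive allocation. Thus $m^{*}$ is popular and non-dangerous, contradicting termination, so $x$ is unsplit.

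\emph{Job feasibility.} I would maintain the invariant $x(j)=q(j)$ for all jobs, true initially because of the dummy machine. Every augmentation raises each proposal edge of the rotation by $\varepsilon$ and lowers each refusal edge by $\varepsilon$, and each job occurring on the rotation occurs with exactly one proposal edge (raised) and exactly one refusal edge $r(j)$ (lowered), so $x(j)$ is unchanged; Case~(i) keeps refusal values nonnegative, and since $x(j)=q(j)$ with a positive refusal value, the raised proposal edge stays at most $q(j)$ and within its capacity. Hence $x(j)=q(j)$ throughout, which together with the unsplit property yields a genuine unsplit assignment respecting all job constraints.

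\emph{Machine feasibility.} This is the substantive part, and I would state it as an invariant preserved by every augmentation: for each machine $m$, either $x(m)\le q(m)$, or $m$ has a single positive incoming proposal edge $jm$, this is the worst-ranked positive edge into $m$ (as in Claim~\ref{claim:structure_of_h}), and $x(m)-x(jm)<q(m)$. It holds at the start because the initial stable allocation is feasible. For the inductive step I would exploit the alternating edge structure of a rotation: on a cycle every machine gains one raised proposal edge and one lowered refusal edge, so its total is unchanged; on a path the only machine whose total strictly increases is the start machine $m_{0}$, and it rises by exactly $\varepsilon$, but the edge $j_{0}m_{0}$ --- which is $m_{0}$'s worst positive edge --- rises by the same $\varepsilon$, so $x(m_{0})-x(j_{0}m_{0})$ is unchanged; the path's final machine only loses allocation. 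The choices of $\varepsilon$ close the remaining gaps: Case~(i) stops a refusal edge exactly at zero, and Case~(ii) caps $\varepsilon$ at $x(m_{i})-q(m_{i})$ so a dangerous endpoint descends precisely to saturation and no further. The delicate point, which I expect to be the real obstacle, is verifying that the structural half of the invariant (an over-capacitated machine's worst positive edge being its unique positive proposal edge) survives an augmentation in which a refusal edge incident to an over-capacitated machine drops to zero and changes which edge is worst; here one must combine Claim~\ref{claim:structure_of_h} with the observation that over-capacitated machines met as interior vertices of a rotation keep their totals fixed. Once the invariant is established, at termination every machine with $x(m)\ge q(m)$ satisfies $x(m)-x(jm)<q(m)$ for its worst assigned (now unsplit) job $j$, which is exactly relaxed-feasibility, completing the proof.
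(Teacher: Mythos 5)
Your treatment of unsplitness and of the job quotas is correct and matches the paper's proof. For machine feasibility, however, you take a genuinely different route --- the paper fixes a machine $m$ and reasons only about the \emph{last} augmentation in which $x(m)$ exceeded $q(m)$, deriving a contradiction from the fact that a dangerous machine never starts a rotation --- and your route, as written, has a real gap precisely at the point you flag as ``the real obstacle.'' The invariant you propose to maintain is false: its second branch requires every over-capacitated machine to possess a positive incoming \emph{proposal} edge that is its worst positive edge, but a dangerous machine is by definition over-capacitated with zero allocation on all of its incoming proposal edges, so it violates that branch outright. Such machines genuinely arise (the algorithm terminates rotations at them and devotes case (ii) of the $\varepsilon$-choice to them), and they arise exactly through the event you worry about: when $j$'s refusal edge drops to zero, the edge $jm$ can become $j$'s new worst positive edge, at which moment $jm$ ceases to be a proposal edge and $m$, if still over-capacitated, becomes dangerous. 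Your induction therefore cannot close as stated, and you have not supplied the argument that would close it.

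The repair is to weaken the structural half of the invariant: assert only that $x(m)-x(e^{*})<q(m)$, where $e^{*}$ is $m$'s worst positive incoming edge, with no requirement that $e^{*}$ be a proposal edge. This survives every position a machine can occupy on a rotation: an interior machine keeps $x(m)$ fixed while $x(e^{*})$ only grows (by Claim~\ref{claim:structure_of_h} the raised proposal edge \emph{is} $e^{*}$); the start machine has $x(m)$ and $x(e^{*})$ grow by the same $\varepsilon$; the end machine only loses allocation, and if its worst positive edge vanishes then the new $x(m)$ itself is below $q(m)$. It also survives the dangerous-machine transition trivially, since nothing in that event changes which edge is worst or any value incident to $m$. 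With that weakening your argument goes through and is, if anything, cleaner than the paper's; alternatively you could adopt the paper's last-crossing argument, which sidesteps maintaining any statement about dangerous machines. (One caveat shared by both proofs: Claim~\ref{claim:structure_of_h} is stated for stable allocations, and its applicability to the intermediate allocations of the rounding procedure is asserted rather than argued; you inherit that reliance rather than introduce it.)
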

	
\begin{proof} 
Since we start with a feasible assignment and jobs never lose or gain
allocation, the quota condition on jobs cannot be harmed. If there is
any edge $jm$ with $0 < x(jm) < q(j)$, then $j$ has at least two
positive edges, the better one must be a positive proposal edge. This
contradicts the termination condition, and hence $x$ is unsplit.
	
We now show that deleting the worst job from each machine results in
an allocation strictly below the machine's quota. It is clearly true
at the beginning, where no machine is over-capacitated (since $x$
starts out as a feasible stable allocation). The only case when $x(m)$
increases is when $m$ is the first machine on a rotation. As such, $m$
has a positive proposal edge $jm$, which is also its worst allocated
edge, due to our earlier claim.  
\begin{itemize}
\item If $m$ is not over-capacitated when choosing the
  rotation, then even if $x(jm)$ rises as high as $q(j)$, this
  increases $x(m)$ by strictly less than $q(j)$. Thus, deleting~$jm$,
  the worst allocated edge of $m$, guarantees that $x(m)$ sinks
  under~$q(m)$.
\item If $m$ is saturated or over-capacitated when choosing the
  rotation, then $jm$ would have been the best proposal edge of $m$
  earlier, when $x(m)$ was not greater than~$q(m)$. Thus, assigning
  $j$ entirely to $m$ does not harm the relaxed quota condition. Let
  us consider the last step as $x(m)$ exceeded~$q(m)$. Again, $m$ was
  the starting vertex of an augmenting path, having a positive
  proposal edge. If it was~$jm$, our claim is proved. Otherwise $m$
  became over-capacitated while $x(jm)$ was zero, and then increased
  the allocation on~$jm$. But between those two operations, $m$
  had to become dangerous, because it switched its best proposal edge
  to~$jm$. Dangerous machines never start alternating paths. Thus, we
  have a contradiction to the fact that we considered the last step
  when $x(m)$ exceeded~$q(m)$.
\end{itemize}\end{proof}

\begin{claim}
The final allocation $x$ is stable.
\end{claim}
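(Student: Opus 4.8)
The plan is to maintain an invariant on $x$ throughout the execution of the rounding algorithm and then read off stability from it at termination, using the fact (already established) that the final $x$ is a feasible relaxed unsplit assignment. The invariant is
\[
(\star)\qquad \text{for every empty proposal edge } jm \text{ (w.r.t.\ the current } x),\ \ x(m)\ge q(m)\ \text{ and every job with positive allocation on } m \text{ is preferred by } m \text{ to } j .
\]
Here ``$jm$ is a proposal edge'' means $\text{rank}_j(jm)>\text{rank}_j(r(j))$, i.e.\ $j$ currently has positive allocation on some machine it likes less than $m$. Granting $(\star)$ at termination, suppose some edge $jm$ blocks the final (unsplit) allocation: then $x(jm)=0$, $j$ strictly prefers $m$ to its unique allocated machine (which is $r(j)$), so $jm$ is a proposal edge, and $m$'s quota is not filled with jobs preferred to $j$. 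But by $(\star)$ the whole allocation of $m$ comes from jobs preferred to $j$ and totals at least $q(m)$, so the quota \emph{is} filled with such jobs — a contradiction. Hence $x$ is stable.

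First I would check $(\star)$ for the initial fractional stable allocation. If $jm$ is an empty proposal edge, then $j$ has positive allocation on a worse machine, so the first disjunct of fractional stability for edge $jm$ fails and the second must hold: $m$ is saturated with jobs it prefers to $j$. Thus $x(m)\ge q(m)$ and the allocation on $m$ from jobs preferred to $j$ is already $\ge q(m)$; feasibility $x(m)\le q(m)$ of the starting allocation then forces equality, so $m$ carries no positive allocation from a job not preferred to $j$ — exactly the second clause of $(\star)$. Next I would show a single rotation augmentation preserves $(\star)$, checking the machines on the rotation (every other machine and edge is untouched), where the proposal edges $j_im_i$ rise by $\varepsilon$ and the refusal edges $j_im_{i+1}=r(j_i)$ drop by $\varepsilon$. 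For the start machine $m_0$ of a path rotation, $x(m_0)$ only grows and its set of positively allocated jobs is unchanged, so $(\star)$ only gets easier; for an end machine $m_k$, if it is unpopular there are no incoming proposal edges and nothing to check, and if it is dangerous the choice of $\varepsilon$ (case (ii)) keeps $x(m_k)\ge q(m_k)$ while its positive-job set only shrinks. The crucial case is a middle machine $m_i$ (this also covers every machine of a cycle rotation): $x(m_i)$ is unchanged, and the only edges into $m_i$ that move are $j_{i-1}m_i$ (down) and $j_im_i$ (up). If $jm_i$ is an empty proposal edge, then by $(\star)$ \emph{before} the step every positively allocated job on $m_i$ — in particular both $j_{i-1}$ and $j_i$ — is preferred by $m_i$ to $j$, so the $\varepsilon$ removed from $j_{i-1}$ and the $\varepsilon$ added to $j_i$ both lie strictly above $j$ on $m_i$'s list; the allocation on $m_i$ from jobs preferred to $j$ is therefore unchanged and $x(m_i)$ is unchanged, while the positive-job set only shrinks, so $(\star)$ survives. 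Finally, no \emph{new} empty proposal edge is created: the only edges that can reach zero are the refusal edges $r(j_i)$, and once $r(j_i)$ is zero it lies behind $j_i$'s new refusal edge and so is no longer a proposal edge of $j_i$; and no edge can newly become a proposal edge, since jobs never worsen their position during the algorithm.

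The main obstacle is the middle-machine case: dropping the refusal edge $j_{i-1}m_i$ could, a priori, decrease the ``better-than-$j$'' allocation on $m_i$ and turn some empty edge $jm_i$ into a blocking edge. What makes the argument go through is that the invariant must be strengthened from the bare stability requirement ``the allocation of $m$ from jobs preferred to $j$ is $\ge q(m)$'' to additionally record that $m$ holds \emph{no} positive allocation behind $j$; this strengthening is available initially precisely because the starting fractional stable allocation is feasible, and it is exactly what forces both changing edges $j_{i-1}m_i$ and $j_im_i$ to sit above $j$, so that their $\pm\varepsilon$ contributions cancel. Once one settles on $(\star)$ in this strengthened form, all remaining cases are routine bookkeeping on which edges change and which machine loads move.
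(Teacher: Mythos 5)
Your proof is correct, and it is worth noting where it departs from the paper's argument. The paper also inducts over augmentation steps, but it does so by contradiction on the \emph{first} edge $jm$ to become blocking: the job side is dismissed because jobs' lexicographic positions never worsen, and the machine side is dismissed with a brief appeal to the machine ``playing its best response.'' That machine-side step is exactly where the real danger lies: at a middle machine $m_i$ of a rotation, allocation is simultaneously removed from the incoming refusal edge $j_{i-1}m_i$ and added to the (worst positive) proposal edge $j_im_i$, and if an empty proposal edge $jm_i$ sat strictly between them in $m_i$'s order, the ``better-than-$j$'' mass would genuinely drop. The paper leaves implicit why this cannot happen. You close that gap by strengthening the inductive hypothesis from ``$jm$ is not blocking'' to your $(\star)$: for every empty proposal edge $jm$, the machine $m$ carries \emph{no} positive allocation from jobs it ranks below $j$ (and $x(m)\ge q(m)$). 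You correctly observe that this stronger statement is available at initialization only because the starting fractional allocation is feasible (so $x(m)=q(m)$ exactly), and that it forces both changing edges at a middle machine to lie above $j$, making their $\pm\varepsilon$ contributions cancel; the remaining cases (start machine only gains on an already-positive edge, unpopular or dangerous end machines, and the fact that no new empty proposal edges arise because supports only shrink and refusal pointers only move up) are handled correctly. The payoff of your route is a fully airtight treatment of the one case the paper's terse argument glosses over; the cost is carrying a global invariant rather than a local first-failure analysis.
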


\begin{proof} 
Suppose some edges block~$x$. Since we started with a stable
allocation, there was a step during the execution of the algorithm
when the first edge $jm$ became blocking. Before this step, either $j$
or $m$ was saturated or over-capacitated with better edges
than~$jm$. The change can be due to two reasons:
\begin{enumerate}
	\item $j$ gained allocation on an edge worse than $jm$, or
	\item $m$ gained allocation on an edge worse than~$jm$.
\end{enumerate}
As already mentioned, $j$'s lexicographical position never worsens:
$\text{rank}_j(p) > \text{rank}_j(r(j))$ always holds. The second
event also may not occur, because machines always play their best
response strategy. An edge $jm$ that becomes blocking when allocation
is increased on an edge worse than it, was already a proposal edge
before. Thus, $m$ would have chosen~$jm$, or an edge better than $jm$
to add it to the augmenting path.
\end{proof}

Since each augmentation requires $O(|V|)$ time and there are $O(|E|)$
augmentations, our rounding algorithm runs in $O(|E||V|)$ total time.
If desired, dynamic tree data structures can be used (much like in
\cite{DBLP:journals/algorithmica/DeanM10}) to augment in $O(\log |V|)$
time, bringing the total time down to just $O(|E| \log |V|)$.  

Although jobs improve their lexicographical position in each rotation,
the output of the algorithm is not necessarily~$x_{\text{jopt}}$. In
fact, even $x_{\text{mopt}}$ can be reached via this approach.
Ideally, this approach can serve as a heuristic to generate many other
relaxed unsplit stable allocations, if run from a variety of different
initial stable solutions $x$.

\bibliographystyle{plain}
\bibliography{mybib2}
\end{document}